\documentclass[letterpaper,11pt]{article}
\usepackage[utf8]{inputenc}

\usepackage{amsmath, amsthm, amssymb, thm-restate}
\usepackage{algorithmicx}
\usepackage[table,xcdraw]{xcolor}
\usepackage[vlined,linesnumbered]{algorithm2e}
\usepackage{setspace}
\usepackage{mathtools}
\usepackage[numbers]{natbib}
\usepackage{comment} 
\usepackage{tcolorbox} 
\usepackage{xfrac}
\usepackage{hyperref}
\usepackage{multirow}
\usepackage{caption}
\usepackage{bm}
\usepackage{newfloat}
\usepackage{enumitem}
\usepackage{dblfloatfix} 
\usepackage{wrapfig}
\usepackage{csquotes}
\usepackage[noabbrev,nameinlink]{cleveref}

\usepackage{svg}
\usepackage{physics}
\usepackage{amsmath}
\usepackage{tikz}
\usepackage{mathdots}
\usepackage{yhmath}
\usepackage{cancel}
\usepackage{color}
\usepackage{siunitx}
\usepackage{array}
\usepackage{multirow}
\usepackage{amssymb}
\usepackage{gensymb}
\usepackage{tabularx}
\usepackage{extarrows}
\usepackage{booktabs}
\usetikzlibrary{fadings}
\usetikzlibrary{patterns}
\usetikzlibrary{shadows.blur}
\usetikzlibrary{shapes}

\usepackage{fancyhdr}

\parskip=5pt

\usepackage[margin=1in]{geometry}

\allowdisplaybreaks

\definecolor{mygreen}{RGB}{10,110,230}
\definecolor{myred}{RGB}{10,110,230}

\hypersetup{
     colorlinks=true,
     citecolor= mygreen,
     linkcolor= myred
}

\renewcommand{\epsilon}{\varepsilon}

\newcommand{\hiddencomment}[1]{}

\renewcommand{\O}{O}
\newcommand{\tO}{\widetilde{\O}}

\DeclareMathOperator*{\Exp}{\ensuremath{{\mathbb{E}}}}
\DeclareMathOperator*{\Prob}{\ensuremath{\textnormal{Pr}}}
\renewcommand{\Pr}{\Prob}

\newcommand{\floor}[1]{{\left\lfloor{#1}\right\rfloor}}
\newcommand{\prob}[1]{\Pr\paren{#1}}
\newcommand{\card}[1]{\left\lvert#1\right\rvert}
\newcommand{\paren}[1]{\left( #1 \right)}
\newcommand{\bracket}[1]{\left[ #1 \right]}
\newcommand{\expect}[1]{\Exp\bracket{#1}}

\newcommand{\eps}{\varepsilon}

\renewcommand{\epsilon}{\eps}

\newcounter{protocolcounter}
\newenvironment{protocol}{
\refstepcounter{protocolcounter}
\begin{whitetbox}
\textbf{Protocol \theprotocolcounter:}%
}{\end{whitetbox}} 
\crefname{protocolcounter}{Protocol}{Protocols}

\newcounter{lpcounter}
 
\crefname{lpcounter}{LP}{LPs}

\DeclareMathOperator{\poly}{poly}

\crefname{lemma}{Lemma}{Lemmas}
\crefname{theorem}{Theorem}{Theorems}
\crefname{property}{Property}{Properties}
\crefname{claim}{Claim}{Claims}
\crefname{definition}{Definition}{Definitions}
\crefname{observation}{Observation}{Observations}
\crefname{proposition}{Proposition}{Propositions}
\crefname{assumption}{Assumption}{Assumptions}
\crefname{line}{Line}{Lines}
\crefname{figure}{Figure}{Figures}
\creflabelformat{property}{(#1)#2#3}
\crefname{equation}{}{}
\crefname{section}{Section}{Sections}
\crefname{appendix}{Appendix}{Appendices}
\crefname{algCounter}{Algorithm}{Algorithms}
\Crefname{algCounter}{Algorithm}{Algorithms}

\newtheorem{theorem}{Theorem}
\newtheorem{lemma}{Lemma}[section]
\newtheorem{proposition}[lemma]{Proposition}

\newtheorem{definition}[lemma]{Definition}
\newtheorem{claim}[lemma]{Claim}

\definecolor{mylightgray}{RGB}{230,230,230}


\algnewcommand{\IIf}[2]{\textbf{if} #1 \textbf{then} #2}
\algnewcommand{\EndIIf}{\unskip\ \algorithmicend\ \algorithmicif}

\newenvironment{graytbox}{
\par\addvspace{0.1cm}
\begin{tcolorbox}[width=\textwidth,
                  boxsep=5pt,
                  left=1pt,
                  right=1pt,
                  top=2pt,
                  bottom=2pt,
                  boxrule=0pt,
                  arc=0pt,
                  colback=mylightgray,
                  colframe=black,
                  ]
}{
\end{tcolorbox}
}

\newenvironment{whitetbox}{
\par\addvspace{0.1cm}
\begin{tcolorbox}[width=\textwidth,
                  boxsep=5pt,
                  left=1pt,
                  right=1pt,
                  top=2pt,
                  bottom=2pt,
                  boxrule=1pt,
                  arc=0pt,
                  colframe=black,
                  colback=white
                  ]
}{
\end{tcolorbox}
}

\newcounter{algCounter}

\makeatletter
\renewcommand{\paragraph}{%
  \@startsection{paragraph}{4}%
  {\z@}{10pt}{-1em}%
  {\normalfont\normalsize\bfseries}%
}
\makeatother

\makeatletter
\patchcmd{\@algocf@start}
  {-1.5em}
  {0pt}
  {}{}
\makeatother

\title{Robust Communication Complexity of Matching:\\ EDCS Achieves 5/6 Approximation}

\author{
Amir Azarmehr\\{\em Northeastern University} \and Soheil Behnezhad \\{\em Northeastern University}
}

\date{}


\begin{document}

\maketitle

\thispagestyle{empty}

\begin{abstract}
    We study the {\em robust communication complexity} of maximum matching. Edges of an {\em arbitrary} $n$-vertex graph $G$ are {\em randomly} partitioned between Alice and Bob independently and uniformly. Alice has to send a single message to Bob such that Bob can find an (approximate) maximum matching of the whole graph $G$. We specifically study the best approximation ratio achievable via protocols where Alice communicates only $\tO(n)$ bits to Bob.

    \smallskip\smallskip    
    There has been a growing interest on the robust communication model due to its connections to the random-order streaming model. An algorithm of Assadi and Behnezhad [ICALP'21] implies a $(2/3+\epsilon_0 \sim .667)$-approximation for a small constant $0 < \epsilon_0 < 10^{-18}$, which remains the best-known approximation for general graphs. For bipartite graphs, Assadi and Behnezhad [Random'21] improved the approximation to .716 albeit with a computationally inefficient (i.e., exponential time) protocol.
    
    \smallskip\smallskip    
    In this paper, we study a natural and efficient protocol implied by a random-order streaming algorithm of Bernstein [ICALP'20] which is based on {\em edge-degree constrained subgraphs} (EDCS) [Bernstein and Stein; ICALP'15]. The result of Bernstein immediately implies that this protocol achieves an (almost) $(2/3 \sim .666)$-approximation in the robust communication model. We present a new analysis, proving that it achieves a much better (almost) $(5/6 \sim .833)$-approximation. This significantly improves previous approximations both for general and bipartite graphs. We also prove that our analysis of Bernstein's protocol is tight.
    
\end{abstract}


\clearpage
\setcounter{page}{1}

\section{Introduction}

Given an $n$-vertex graph $G=(V, E)$, a {\em matching} is a collection of vertex disjoint edges in $G$ and a {\em maximum matching} is the matching with the maximum size. In this paper, we study matchings in Yao's (one-way) communication model \cite{Yao79}. The edge-set $E$ is partitioned between two players Alice and Bob. Alice has to send a single message to Bob such that Bob can find an (approximate) maximum matching of the whole graph $G$. We are particularly interested in the trade-off between the size of the message sent by Alice and the approximation ratio of the output solution. Besides being a natural problem, this communication model is closely related to streaming algorithms and has thus been studied extensively over the years \cite{GoelKK12,Kapralov21,ChakrabartiCM08,AssadiB-ICALP21,AssadiB-RANDOM21}.

In order to obtain an {\em exact} maximum matching, it is known that $\Omega(n^2)$ bits of communication are needed \cite{FeigenbaumKMSZ05}. That is, the trivial protocol where Alice sends her whole input to Bob is optimal. The situation is more interesting for approximate solutions. It is clear that $\Omega(n)$ words of communication are needed for any approximation as the whole matching can be given to Alice. A natural question, therefore, studied in numerous prior works \cite{GoelKK12,Kapralov13,Kapralov21,AssadiB-ICALP21,AssadiB-RANDOM21}, is the best approximation achievable via protocols that have a near-optimal communication complexity of $\widetilde{O}(n) = \O(n \poly\log)$.

It is not hard to see that if Alice sends a maximum matching of her input to Bob, then Bob can find a 1/2-approximate matching. There is, however, a more sophisticated approach based on the powerful {\em edge-degree constrained subgraph (EDCS)} of \citet*{BernsteinS15} that achieves an (almost) 2/3-approximation (see the paper of \citet*{AssadiBern-SOSA19}). This turns out to be the right approximation under an adversarial partitioning of edges. In their seminal paper, \citet*{GoelKK12} proved that obtaining a better than 2/3-approximation requires $n^{1+1/(\log \log n)} \gg n \poly\log n$ communication.

The communication model discussed above is {\em doubly worst-case} in that both the input graph and the edge partitioning are chosen by an adversary. In this paper, we study the so called {\em robust communication} model---\mbox{\`a la} \citet*{ChakrabartiCM08}---where the graph $G$ is still chosen by an adversary but its edges are now {\em randomly} partitioned between Alice and Bob (i.e., each edge is uniformly given either to Alice or Bob independently). This model goes beyond the doubly worst-case scenario discussed above and sheds light on whether the hardness of a problem is inherent to the input graph or rather a pathological partitioning of its edges. Another motivation behind the study of the robust communication model is its connections to {\em random-order} streams. In particular, almost all known lower bounds for random-order streams are proved in this robust communication model.

While existing adversarial partitioning protocols already imply an (almost) 2/3-approximation in the robust communication model, a random-order streaming algorithm of \citet*{AssadiB-ICALP21} implies a better bound. Their algorithm starts with an EDCS-based algorithm of \citet*{Bernstein20}, and then augments it with a number of short {\em augmenting paths}, achieving a $(2/3+\epsilon_0)$-approximation for some fixed constant $0 < \epsilon_0 < 10^{-18}$.
This remains the best-known approximation in general graphs. For bipartite graphs, an entirely different approach of \citet*{AssadiB-RANDOM21} achieves a larger .716-approximtaion although their protocol runs in doubly exponential time.

In this paper, we give a new analysis for the EDCS-based protocol of \citet*{Bernstein20} showing that, without any augmentation, it  already achieves a much better than 2/3-approximation. 
\begin{graytbox}
\begin{theorem}\label{thm:main}
Bernstein's protocol \cite{Bernstein20} with high probability achieves a $(1-\epsilon)5/6 \sim .833$ approximation in the robust communication model using $O(n \cdot \log n \cdot \poly(1/\epsilon))$ words of communication.
\end{theorem}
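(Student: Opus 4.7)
My plan is to analyze the natural protocol implied by Bernstein's algorithm: Alice constructs an EDCS $H_A$ of her subgraph $G_A := (V, E_A)$ with parameters $\beta = \Theta(\poly(1/\epsilon)\log n)$ and $\beta^- = (1-O(\epsilon))\beta$ and sends $H_A$ to Bob; Bob then outputs the maximum matching in $G' := H_A \cup E_B$. Since any EDCS has at most $n\beta/2$ edges, the $O(n\log n \cdot \poly(1/\epsilon))$-word communication bound is immediate, so the whole weight of the proof rests on showing $\mu(G') \ge (5/6 - O(\epsilon))\mu(G)$ with high probability.

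I would attack this by a direct ``two-piece'' construction of a large matching in $G'$. Fix a maximum matching $M^*$ of $G$ of size $\mu := |M^*|$ and write $M^* = M^*_A \sqcup M^*_B$ according to the partition. Chernoff bounds give $|M^*_A|, |M^*_B| \ge (1/2-\epsilon)\mu$ with high probability; since $M^*_B \subseteq E_B \subseteq G'$ is already a matching, it contributes the first $\mu/2$ of the target $5\mu/6$ for free. For the remaining $\mu/3$, set $S := V \setminus V(M^*_B)$: because $V(M^*_A) \cap V(M^*_B) = \emptyset$, the matching $M^*_A$ sits inside $G_A[S]$, so $\mu(G_A[S]) \ge |M^*_A| \ge (1/2 - \epsilon)\mu$. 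If Bernstein's $2/3$-approximation theorem could be applied to $H_A[S]$ viewed as an EDCS of $G_A[S]$, it would extract a matching of size $\ge (2/3-O(\epsilon))|M^*_A| \ge (1/3 - O(\epsilon))\mu$ inside $H_A[S] \subseteq G'$ and disjoint from $V(M^*_B)$, and combining with $M^*_B$ would yield the claimed $(5/6 - O(\epsilon))\mu$-matching in $G'$.

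The main obstacle is controlling the restriction $H_A[S]$: the upper-degree bound $d_{H_A[S]}(u) + d_{H_A[S]}(v) \le \beta$ is automatic, but the lower-degree bound $\ge \beta^-$ for non-edges in $G_A[S]$ can fail when many $H_A$-neighbors of $u$ or $v$ lie in the deleted set $V(M^*_B)$. The crucial leverage is that $V(M^*_B)$ is a \emph{random} subset of $V(M^*)$: each $M^*$-edge lies in $E_B$ independently with probability $1/2$, so each vertex loses in expectation at most half of its $H_A$-neighborhood in $V(M^*)$, with Chernoff concentration. With a sufficiently generous slack $\beta - \beta^- = \Theta(\epsilon\beta)$ and a careful bucketing of vertices by their degree profile, I would show that $H_A[S]$ is with high probability an EDCS of $G_A[S]$ with parameters $(\beta, \beta^- - O(\epsilon)\beta)$, so Bernstein's theorem applies modulo $O(\epsilon)\mu$ loss. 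Should this direct route fall short of the tight constant, the backup is an augmenting-path argument in $M \triangle M^*$ that uses the EDCS degree lower bound to construct short augmenting paths in $G'$ from any bad $M^*$-edge in $E_A \setminus H_A$. Finally, I would exhibit a tight instance (likely an odd-cycle or disjoint-triangle construction) certifying that the constant $5/6$ is unimprovable, as the theorem statement asserts.
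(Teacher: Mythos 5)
Your high-level strategy matches the paper's: commit to $M^*_B$, which gives a $\mu/2$-sized matching for free, and augment it by a $\mu/3$-sized matching found inside $S = V \setminus V(M^*_B)$ using Alice's communicated subgraph. However, the specific mechanism you propose for extracting that $\mu/3$-matching — showing $H_A[S]$ is an EDCS of $G_A[S]$ with parameters $(\beta, \beta^- - O(\epsilon)\beta)$ — does not hold. A vertex $u$ can have \emph{all} of its $H_A$-neighbors inside $V(M^*)$ (in the paper's tight example every $H$-edge lies between two $M^*$-covered vertices), so restricting to $S$ deletes roughly \emph{half} of $u$'s $H_A$-degree, not an $O(\epsilon)$ fraction. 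Thus the lower edge-degree bound can degrade to roughly $\beta^-/2$; no slack of the form $\Theta(\epsilon\beta)$ nor any bucketing of degree profiles can rescue this, because the loss is a constant factor, not $1 \pm O(\epsilon)$. An EDCS with $\beta^- \approx \beta/2$ no longer carries a $2/3$-approximate matching, so the argument stalls. (Your backup — short augmenting paths in $M \triangle M^*$ — is close in spirit to the prior Assadi--Behnezhad route, which only gives $2/3 + 10^{-18}$, not $5/6$.)

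The paper sidesteps exactly this obstacle by never trying to preserve the EDCS structure after restriction. Instead it builds a \emph{fractional} matching $x$ of mass $\approx \tfrac{2}{3}\mu$ supported on $H \cup U$ by iteratively extracting $\lambda\beta$ maximum matchings $M_1, \dots, M_{\lambda\beta}$ and averaging; the key point is that each edge outside $M^*$ has tiny fractional value $x_e \le 1/(\lambda\beta)$, so the mass is spread out. It then defines a new fractional matching $y$ that takes each $M^*$-edge given to Bob with full weight, zeroes out edges adjacent to them, and boosts the surviving $x$-mass by $(1-p)/p_e$ to exactly compensate for the probability of survival; a per-vertex expectation identity shows $\Exp[\hat y_u] = p\cdot\chi_{M^*}(u) + (1-p)x_u$, and a Chernoff bound controls the overflow when rescaling. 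This yields $\Exp[\sum_e y_e] \ge (2/3 + p/3 - O(\epsilon))\mu$, and a blossom-inequality argument converts $y$ into an integral matching. That machinery is precisely what replaces your attempted ``restriction stays an EDCS'' claim; without it, the $5/6$ bound is out of reach. A small secondary point: in Bernstein's protocol Alice does not build an EDCS of her entire input — she builds a bounded-edge-degree subgraph $H$ from a small $\epsilon$-sample $E_s$ and sends $H$ together with the \emph{underfull} edges $U_A$ from $E_A \setminus E_s$; this structure (sample-independence of the partition) is essential for arguing that, conditioned on $H$ and $U$, the remaining edges are still randomly split between Alice and Bob.
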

\end{graytbox}

\cref{thm:main} improves, rather significantly, the state-of-the-art approximation for both general and bipartite graphs from $.667$ \cite{AssadiB-ICALP21}  and $.716$  \cite{AssadiB-RANDOM21} respectively to $.833$. We note that Bernstein's protocol runs in linear time in the input size; hence \cref{thm:main}, in addition to improving approximation, also improves the running time of the algorithm of \cite{AssadiB-RANDOM21} from doubly exponential to linear. Besides these quantitative improvements, we believe that a more important qualitative implication of \cref{thm:main} is that EDCS, which has been used in the literature to only obtain $2/3$ or slightly-larger-than-2/3 approximations in various models, can be used to obtain a significantly better approximation in the robust communication model.

Our analysis can be applied to the more general {\em multi-party} one-way robust communication model where instead of two players Alice and Bob, the input is randomly partitioned between $k$ players (see \cref{sec:preliminaries} for the formal definition of the model). This communication model is particularly of interest since any lower bound in it, for any choice of $k$, also implies a lower bound for random-order streams. We show the following, which generalizes \cref{thm:main}:


\begin{graytbox}
    \begin{theorem} \label{thm:multi-party}
    For any $k \geq 2$ and any $\epsilon > 0$, Bernstein's protocol \cite{Bernstein20} in the $k$-party one-way robust communication model achieves a $(1-\epsilon)(\frac{2}{3} + \frac{1}{3k})$-approximation of maximum matching using messages of length $O(n \cdot \log n \cdot \poly(1/\epsilon))$.
    \end{theorem}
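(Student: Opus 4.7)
The plan is to mirror the proof of Theorem~\ref{thm:main} (the $k=2$ case) and track how every quantitative estimate depends on $k$. First, I would write down the natural $k$-party extension of Bernstein's protocol: player $1$ computes an EDCS $H_1$ of her edge set $E_1$; for $i > 1$, player $i$ receives $H_{i-1}$, reads her own randomly assigned edge set $E_i$, computes an EDCS $H_i$ of $H_{i-1} \cup E_i$, and forwards $H_i$ to player $i+1$; the final player outputs a maximum matching of $H_k$. Since an EDCS with parameter $\beta = \poly(1/\eps)$ has at most $n\beta$ edges, every message fits in $O(n \log n \cdot \poly(1/\eps))$ bits, giving the claimed communication bound independently of $k$.

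The main task is to show that $H_k$ contains a $(1-\eps)(\tfrac{2}{3} + \tfrac{1}{3k})$-approximate matching of $G$. Fix a maximum matching $M^*$ of $G$; the baseline EDCS lemma already yields $\mu(H_k) \ge (\tfrac{2}{3} - \eps)|M^*|$, so it suffices to recover an additional $\tfrac{1}{3k}|M^*|$ matched edges by exploiting the random partition. Every edge $(u,v) \in M^* \setminus H_k$ satisfies $\deg_{H_k}(u) + \deg_{H_k}(v) \ge \beta^-$ by the EDCS definition, and these forced edges around the endpoints of $M^*$ contain, with high probability, enough short augmenting or alternating structure to boost the recoverable matching. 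The $k=2$ argument of Theorem~\ref{thm:main} corresponds to each vertex's neighbourhood being split $50/50$ between the two players, which produces the $\tfrac{1}{6}$ improvement; replacing the per-player sampling rate by $1/k$ and rerunning the same counting argument should yield the additive gain $\tfrac{1}{3k}$, interpolating correctly between $\tfrac{1}{3}$ at $k = 1$ (exact matching, since one player sees everything) and $0$ as $k \to \infty$ (matching the doubly worst-case EDCS bound).

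The main obstacle is the dependency structure of the chain: $H_i$ depends on all of $E_1, \ldots, E_i$ through the previous EDCSs, so the $E_i$'s cannot be treated as independent fresh randomness when applying concentration. I would handle this by revealing the partition in rounds and conditioning on $H_1, \ldots, H_{i-1}$ before looking at $E_i$; because the EDCS invariants involve only local degree constraints, this conditioning preserves the structural inequalities needed by the counting argument. The only quantitative cost is redoing the Chernoff-style bounds at the smaller sample rate $1/k$ rather than $1/2$, which should go through whenever $\beta$ is a sufficiently large polynomial in $1/\eps$ (so the communication bound remains $k$-independent). Carrying the $1/k$ factor carefully through every step so that it appears only in the additive gain---and not in the baseline $\tfrac{2}{3}$ term or in the $(1-\eps)$ error---is where most of the care will need to go.
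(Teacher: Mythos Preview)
Your proposal diverges from the paper both in the protocol and, more seriously, in the analysis strategy.

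\textbf{Protocol.} The paper does \emph{not} have every party recompute an EDCS. Only party~1 constructs $H$ (from an $\epsilon$-subsample of her own edges); every later party $i<k$ simply forwards, together with the message she received, the $(H,\beta,\lambda)$-underfull edges in her own $E_i$. The last party then outputs a maximum matching of $H\cup U_A\cup E_k$, where $U_A$ collects all forwarded underfull edges. This lets the analysis treat parties $1,\dots,k-1$ collectively as ``Alice'' and party $k$ as ``Bob'' receiving each edge with probability $p=1/k$; the entire two-party analysis (which is already written for a generic $p\le 1/2$) then applies verbatim. There is no chain of dependent $H_i$'s to worry about: $H$ is fixed once by party~1 using only an $\epsilon$-fraction of edges, and conditioned on this sample the remaining edges are still i.i.d.\ among the $k$ parties.

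\textbf{Analysis.} Your plan is to start from the $(2/3-\epsilon)$ EDCS baseline and then argue that the random partition supplies ``short augmenting or alternating structure'' worth an extra $1/(3k)$. The paper explicitly rules this route out: \cref{sec:technical-overview} and the claim in \cref{apx:bad-example} exhibit instances where \emph{every} matching $M\subseteq H\cup U$ satisfies $|M|+\tfrac12\mu(G-V(M))\le 0.75\,\mu(G)$, so committing first to a $2/3$-approximate matching in $H\cup U$ and then augmenting cannot reach $5/6$. The actual argument goes the other way: fix $M^*$, let $M'\subseteq M^*$ be the edges seen by the last party (each present independently with probability $p=1/k$), and show via a carefully constructed fractional matching $y$ supported on $E_B\cup H\cup U_A$ that one can augment $M'$ using $H\cup U$ to total size $p\,\mu(G)+(1-p)\cdot\tfrac23\mu(G)=(\tfrac23+\tfrac{p}{3})\mu(G)$. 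The $1/(3k)$ gain comes from the $p\cdot\mu(G)$ term, not from augmenting-path counting, and the Chernoff bounds are used only to control overflow in the fractional matching, with no dependence on $k$ beyond setting $p=1/k$.
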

\end{graytbox}

We note that the current best approximation known in the random-order streaming setting for maximum matching is $(2/3 + 10^{-18})$ by \citet*{AssadiB-ICALP21}. \cref{thm:multi-party} implies that either there is a better random-order streaming algorithm for maximum matching (which likely is the case), or else to prove a tight lower bound via the multi-party communication model, one has to consider at least $k \geq 10^{18}/3$ parties!

Finally, we show that our guarantees of \cref{thm:main,thm:multi-party} are tight for Bernstein's protocol. That is, we show that:

\begin{graytbox}
    \begin{theorem} \label{thm:tightness}
For any $k \geq 2$, there exist an infinite family of graphs $G$ such that the expected approximation ratio of Bernstein's protocol in the $k$-party one-way robust communication model
is at most $(\frac{2}{3} + \frac{1}{3k})$.
\end{theorem}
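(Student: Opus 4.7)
The plan is to construct, for each $k \geq 2$, an infinite family of graphs on which Bernstein's protocol achieves expected approximation ratio exactly $\frac{2k+1}{3k}$, showing that the guarantee of \cref{thm:multi-party} cannot be improved. I would follow the standard template of disjoint copies of a small \emph{bad gadget}: take a constant-size graph $H_k$ (depending only on $k$) with maximum matching $\mu(H_k) = 3$, and define $G_m$ to be the vertex-disjoint union of $m$ copies of $H_k$. Because an EDCS of a disjoint union decomposes into the EDCSs of its connected components, the execution of Bernstein's protocol on $G_m$ reduces to $m$ independent executions on a single copy of $H_k$, each driven only by how the constantly-many edges of that copy are randomly assigned to the $k$ players.

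The design goal for $H_k$ is that, under the uniformly random $k$-partition of $E(H_k)$, the matching returned within that copy has size $3$ with probability exactly $1/k$ and size $2$ with probability $1 - 1/k$, giving per-copy expectation $(2k+1)/k$. A natural candidate begins from the classical $2/3$-tight instance for EDCS and augments it with a single \emph{critical} edge whose fate in the final EDCS depends on which player receives it: if the critical edge arrives at the last player (probability $1/k$ by symmetry of the partition), it is inserted without being displaced by subsequent swaps, lifting the matching from size $2$ to size $3$; if it arrives at an earlier player, later insertions from other players cause it to be swapped out of the EDCS (or block a different matching edge), leaving only a matching of size $2$.

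Given such a gadget, the theorem follows by aggregation. Linearity of expectation gives expected matching size $m(2+1/k) = (2k+1)m/k$ against $\mu(G_m) = 3m$, for a ratio of $\frac{2k+1}{3k}$. A Chernoff/Hoeffding bound applied to the $m$ independent Bernoulli contributions from the copies concentrates the ratio at $\frac{2k+1}{3k} + o_m(1)$ with high probability, and letting $m \to \infty$ produces the desired infinite family.

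The main obstacle is exhibiting the gadget $H_k$ and verifying that Bernstein's EDCS dynamics---the upper-bound rejection rule, the swap (lower-bound) rule, and the intra-player arrival order---produce precisely the two-point distribution $\Pr[\mu(\text{output on } H_k) = 3] = 1/k$ for the chosen parameters $\beta$ and $\lambda$. Because $H_k$ has constant size, an exhaustive case analysis over all $k^{|E(H_k)|}$ partitions is tractable in principle, but care is needed to ensure that the construction works uniformly in $k$ and leaves no slack that the protocol could exploit to outperform $\frac{2k+1}{3k}$ on a positive fraction of the copies.
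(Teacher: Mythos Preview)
Your approach has a fundamental gap: a constant-size gadget cannot make Bernstein's protocol fail. Recall that the protocol fixes parameters $\lambda = \epsilon/384$ and $\beta = 50\lambda^{-4}$, so $\beta$ is a large constant depending only on $\epsilon$. An edge $(u,v)$ is dropped from $H \cup U$ only when it is \emph{not} underfull, i.e.\ when $d_H(u) + d_H(v) \geq (1-\lambda)\beta$. In a gadget $H_k$ of size $O(1)$ (even $O(k)$), every vertex has $H$-degree at most $|V(H_k)| - 1 \ll \beta$, so every edge is underfull. Consequently every edge of every copy lands in $U$ and is communicated forward; the last party sees the entire graph and outputs a perfect matching. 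Taking disjoint unions does not help, since edge-degrees are computed within components.

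The paper's construction is not a union of small gadgets but a single large bipartite graph whose size must grow with $\beta$ (it requires $\beta \leq |V(G)|/(12k)$). The vertex parts $A_1,A_2,A_3,B_1,B_2,B_3$ each have size at least $\beta/2$, and the dense complete bipartite blocks $K_1 = G[A_1,B_2]$ and $K_2 = G[A_2,B_3]$ are what allow the first party to build an $H$ in which every vertex of $A_2 \cup B_2$ has $H$-degree $\beta/2$. This is precisely what makes the matching edges of $M_2 = G[A_2,B_2]$ have edge-degree $\geq \beta$, hence not underfull, hence lost unless they happen to be assigned to the last party. The upper bound on the output is then obtained by exhibiting a small vertex cover of $E_L \cup H \cup U$, namely $A_1 \cup B_3$ together with one endpoint of each edge of $M_2$ assigned to the last party; its expected size is $(1+o(1))\bigl(\tfrac{2}{3} + \tfrac{1}{3k}\bigr)\mu(G)$. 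The density of $K_1, K_2$ is the essential ingredient your plan is missing, and it cannot be reproduced by constant-size pieces.
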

\end{graytbox}

\section{Technical Overview} \label{sec:technical-overview}

Bernstein's protocol constructs two subgraphs $H$ and $U$ of size $O(n \log n)$ both of which will be communicated  to Bob. Subgraph $H$ is constructed solely by Alice who does so by revealing only $\epsilon$ fraction of her input graph. The construction guarantees that for some sufficiently large constant $\beta \geq 1$, every edge $(u, v) \in H$ satisfies $\deg_H(u)+\deg_H(v) \leq \beta$. That is, $H$ has edge-degree upper bounded by $\beta$. This already implies that $H$ has at most $O(n\beta) =O(n)$ edges. The subgraph $U$ is simply the set of all the remaining edges $(u, v)$ in the graph $G$ (given either to Alice or Bob) for which $\deg_H(u) + \deg_H(v) \leq \beta -1$. In other words, all the remaining ``underfull'' edges whose edge-degree is less than $\beta$ are added to $U$. While it is not at all clear that $H$ can be constructed in such a way that guarantees $|U| = O(n \log n)$, \citet*{Bernstein20} showed this is indeed possible. At the end, Bob returns a maximum matching of all the edges that he receives.

The subgraph $H \cup U$ can be shown to include an {\em edge-degree constrained subgraph} (EDCS) of $G$, which is known to include a $(2/3-O(\epsilon))$-approximate maximum matching of the base graph $G$ for $\beta \geq 1 /\epsilon$ \cite{BernsteinS15,Behnezhad-EDCS-Arxiv}. This already implies an (almost) 2/3-approximation in our model. This guarantee is in fact tight for the maximum matching contained in $H \cup U$ as illustrated in \cref{fig:tight}. In the example of \cref{fig:tight}, the missed (red dashed) edges have edge-degree $\beta$ in $H$, and so they do not belong to $U$. While the graph in the example of \cref{fig:tight} has a perfect matching, any matching in $H \cup U$ can only match 2/3-fraction of vertices.

\begin{figure}
    \centering
    \includegraphics[scale=0.9]{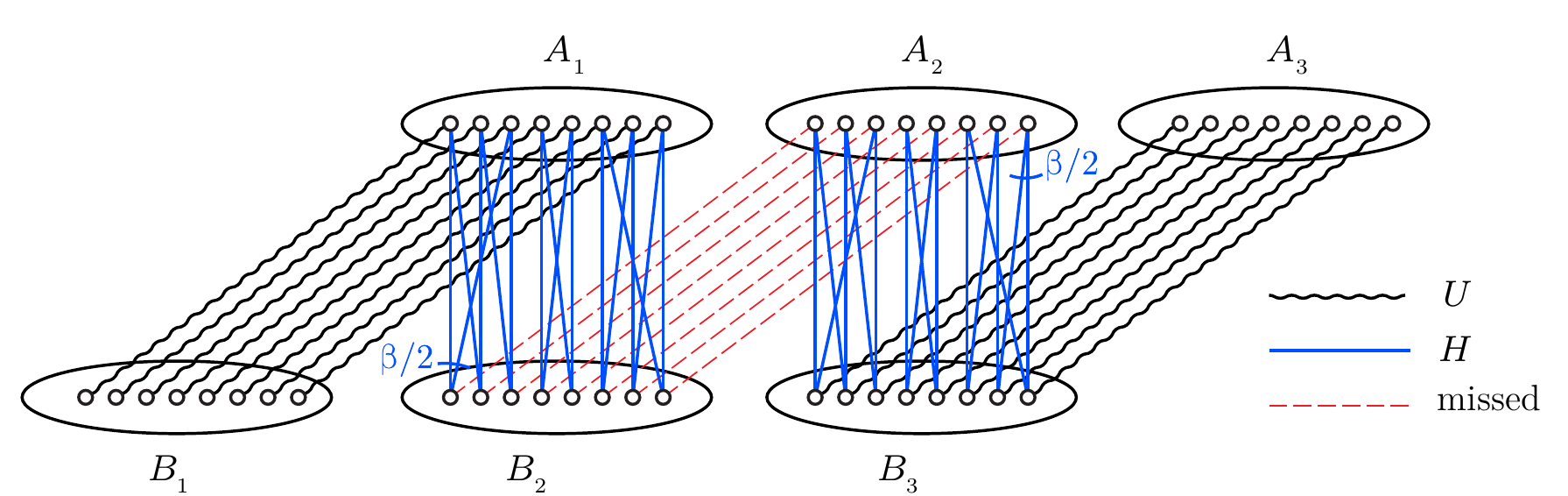}
    \caption{\small An example where  subgraph $H \cup U$ in Bernstein's protocol does not include a better than 2/3-approximation.}
    \label{fig:tight}
\end{figure}

The crucial insight is that although $H \cup U$ may only include a 2/3-approximate matching of the graph, Bob in addition will also have access to the set $E_B$ of the edges originally given to him in the random partitioning. So instead of $H \cup U$, we need to focus on the size of the maximum matching contained in $H \cup U \cup E_B$. Let us now revisit the example of \cref{fig:tight}. As we discussed, the set $H$ is only constructed using a small $\epsilon$ fraction of the edges. 
Moreover, conditioned on $H$, the subgraph $U$ will also be fully determined regardless of how the edges are partitioned between Alice and Bob. This implies that, {\em even conditioned on the outcome of $H$ and $U$}, each dashed edge is given to Bob with probability (almost) 1/2. This results in an (almost) 5/6-approximation in the example of \cref{fig:tight}: We can combine the 2/3-approximate black matching in $U$ with half of the dashed edges, obtaining an (almost) $\frac{2}{3} + \frac{1}{2} \cdot \frac{1}{3} = \frac{5}{6}$ approximation. We remark that this example already shows that our $5/6$-approximation guarantee of \cref{thm:main} is tight for Bernstein's protocol (see \cref{thm:tightness} for the formal proof).

The nice property of the example of \cref{fig:tight} is that subgraph $H \cup U$ includes a 2/3-approximate matching $M$ (the black matching in $U$) where removing its vertices from the graph still leaves a 1/3-approximate matching in $G$ (the dashed red edges). If we prove that this holds for every graph, then we immediately get an (almost) 5/6-approximation analysis for Bernstein's protocol. Unfortunately, however, this property does not hold for all graphs. In \cref{apx:bad-example}, we provide examples of $H, U$ such that for {\em every} matching $M$ in $H \cup U$, it holds that
$$
    |M| + \frac{1}{2}\mu(G - V(M)) \leq 0.75 \mu(G),
$$
where $\mu(G - V(M))$ here is the size of maximum matching remained in graph $G$ after removing vertices of $M$. This implies that this idea is not sufficient to guarantee an (almost) 5/6-approximation for Bernstein's protocol.

In our analysis, instead of first committing to a $2/3$-approximate matching in $H \cup U$ and then augmenting it using the edges in $E_B$, we first commit to a smaller $1/2$-approximate matching by fixing an arbitrary maximum matching $M^*$ and taking half of its edges that are given to Bob. The advantage of this smaller 1/2-approximate matching is that it can be augmented much better. Specifically, we show that this 1/2-approximate matching, in expectation, can be augmented by a matching of size (almost) $\mu(G)/3$ using the edges in $H \cup U$, achieving overall a matching of size (almost) $\frac{1}{2} \mu(G) + \frac{1}{3} \mu(G) = \frac{5}{6} \mu(G)$. The proof of why a matching of size $\mu(G)/3$ can be found within the available vertices is the crux of our analysis and is formalized via fractional matchings.

\section{Preliminaries}\label{sec:preliminaries}


We start by formally defining the robust communication model for maximum matching.

\begin{definition}
    In the $k$-party one-way robust communication model,
    each edge is assigned independently and uniformly to one of the parties.
    The $i$-th party, supplied with the assigned edges 
    and a message $m_i$ from the $(i-1)$-th party,
    decides what message to send to the $(i+1)$-th party.
    The $k$-th and last party is responsible for reporting a matching.
    The communication complexity of a protocol in this model,
    is defined as the maximum number of words in the messages communicated between the parties,
    i.e.\ $\max_i \card{m_i}$, where $\card{m_i}$ denotes the number of words in $m_i$.

    In case $k = 2$, we refer to the first party as Alice and to the second party as Bob.
\end{definition}



We use $\mu(G)$ to denote the size of the maximum matching in graph $G$. For an edge $e$, we define its edge-degree as the sum of the degrees of its endpoints.

\subsection*{Background on Matching Theory} 

\begin{proposition}[folklore] \label{prp:general-fractional-matching}
    Let $G$ be any graph,
    and let $x$ be a fractional matching on $G$,
    such that for every vertex set $S \subseteq V$
    that $\card{S}$ is smaller than $\frac{1}{\epsilon}$,
    we have
    $$
        \sum_{e \in G[S]} x_e \leq \floor{\frac{\card{S}}{2}}.
    $$
    Then, it holds that $\mu(G) \geq (1 - \epsilon) \sum_e x_e$.
\end{proposition}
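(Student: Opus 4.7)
The plan is to use a direct scaling argument together with Edmonds' characterization of the matching polytope. Specifically, I would set $x' := (1-\epsilon)\,x$ and show that $x'$ lies in the matching polytope of $G$, which by Edmonds' theorem is cut out by non-negativity, the degree constraints $\sum_{e \ni v} x_e \leq 1$ for every vertex $v$, and the blossom (odd-set) constraints $\sum_{e \in G[S]} x_e \leq \lfloor |S|/2 \rfloor$ for every odd $S \subseteq V$. Once $x'$ is shown to lie in this polytope, Edmonds' theorem expresses it as a convex combination of matching indicator vectors, so some matching $M$ satisfies $|M| \geq \sum_e x'_e = (1-\epsilon)\sum_e x_e$, giving $\mu(G) \geq (1-\epsilon)\sum_e x_e$ as required.

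Non-negativity and the degree constraints on $x'$ are immediate because $x$ is a fractional matching and $0 \leq 1-\epsilon \leq 1$. The substance is in verifying the blossom inequalities for $x'$, which I would split into two cases based on the size of the odd set $S$.

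For odd $S$ with $|S| < 1/\epsilon$, the hypothesis directly yields $\sum_{e \in G[S]} x_e \leq \lfloor |S|/2 \rfloor$, so after scaling by $(1-\epsilon) \leq 1$ the inequality is preserved. For odd $S$ with $|S| \geq 1/\epsilon$, I would instead use the degree constraints: summing $\sum_{e \ni v} x_e \leq 1$ over $v \in S$ and noting that each edge of $G[S]$ is counted twice yields $\sum_{e \in G[S]} x_e \leq |S|/2$. Scaling gives
\[
  \sum_{e \in G[S]} x'_e \;\leq\; (1-\epsilon)\,\frac{|S|}{2} \;=\; \frac{|S|}{2} - \frac{\epsilon |S|}{2} \;\leq\; \frac{|S|-1}{2} \;=\; \Big\lfloor \frac{|S|}{2} \Big\rfloor,
\]
where the second inequality uses $\epsilon|S| \geq 1$ and the last equality uses that $|S|$ is odd.

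I do not anticipate any real obstacle: the only subtlety is remembering that the blossom constraints are indexed by odd sets, which is precisely what makes the threshold $|S| \geq 1/\epsilon$ tight in the second case. Everything else is bookkeeping around invoking Edmonds' theorem on $x'$.
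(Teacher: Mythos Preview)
Your proposal is correct and follows essentially the same approach as the paper: scale $x$ down by $(1-\epsilon)$ and verify that the resulting vector satisfies all blossom inequalities, splitting into small sets (handled by hypothesis) and large odd sets (handled via the degree bound $\sum_{e\in G[S]} x_e \le |S|/2$ and the inequality $\epsilon|S|\ge 1$). Your treatment of the large-$|S|$ case is in fact slightly cleaner than the paper's sketch, which routes the same inequality through the extra assumption $\sum_{e\in G[S]} x_e \ge \lfloor |S|/2\rfloor$.
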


\begin{proof}[Proof sketch]
Let $z$ be another fractional matching where $z_e = (1 -\epsilon)x_e$.
If the $x$ satisfies the blossom inequality, i.e.\ $\sum_{e \in G[S]} x_e \leq \floor{\frac{\card{S}}{2}}$,
for all $S$ of size at most $\frac{1}{\epsilon}$,
Then $z$ satisfies it for all $S$.
To see this,
let $S$ be an odd-sized vertex set size at least $\frac{1}{\epsilon}$
such that $\sum_{e \in G[S]} x_e \geq \floor{\frac{\card{S}}{2}}$.
Then it holds:
$$
\sum_{e \in G[S]} z_e = (1 - \epsilon)\sum_{e \in G[S]} x_e \leq \sum_{e \in G[S]} x_e - \frac{1}{2} \leq \frac{\card{S}}{2} - \frac{1}{2} \leq \floor{\frac{\card{S}}{2}}.
$$
Hence there exists an integral matching of size at least 
$\sum_e z_e = (1 - \epsilon) \sum_e x_e$.
\end{proof}

The following definitions were introduced by \citet*{Bernstein20}.
The proposition, from the same paper, plays a key role in our analysis.

\begin{definition}
    A graph $H$ has bounded edge-degree $\beta$,
    if for all edges $(u,v) \in E_H$ it holds that
    $d_H(u) + d_H(v) \leq \beta$.
\end{definition}

\begin{definition}
    Given a graph $G$, and a subgraph $H \subseteq G$ 
    an edge $(u, v) \in E_G \setminus E_H$
    is $(H, \beta, \lambda)$-underfull
    if $d_H(u) + d_H(v) < (1 - \lambda)\beta$.
\end{definition}

\begin{proposition} [Lemma 3.1 from \cite{Bernstein20}]
\label{prp:edcs-appx}
    Fix any $\epsilon \in \bracket{0, \frac{1}{2}}$, let $\lambda, \beta$ be parameters such that $\lambda \leq \frac{\epsilon}{384}$,
    $\beta \geq 50 \lambda^{-2} \log\paren{\frac{1}{\lambda}}$.
    Consider any graph $G$, and any subgraph $H$ with bounded edge-degree $\beta$.
    Let $U$ contain all the $(H, \beta, 3\lambda)$-underfull edges in $G \setminus H$.
    Then $\mu(H \cup U) \geq \paren{\frac{2}{3} - \epsilon}\mu(G)$.
\end{proposition}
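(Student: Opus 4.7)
The plan is to follow the classical EDCS-style fractional matching argument: construct an explicit fractional matching $x$ on $H \cup U$ of value at least $(\tfrac{2}{3} - O(\lambda))\mu(G)$, and then invoke \cref{prp:general-fractional-matching} to round it to an integral matching of essentially the same size.

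First I would fix a maximum matching $M^*$ of $G$ and partition it as
\[
    M_H = M^* \cap H, \quad M_U = M^* \cap U, \quad M_F = M^* \setminus (H \cup U).
\]
Every edge $(u,v) \in M_F$ fails to be $(H, \beta, 3\lambda)$-underfull, so $d_H(u) + d_H(v) \geq (1 - 3\lambda)\beta$; every edge of $H$ satisfies the converse bound $d_H(u) + d_H(v) \leq \beta$ by the edge-degree hypothesis.

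Next I would define $x$ as follows. Every edge of $M_H \cup M_U$ receives weight close to $2/3$ directly. For each $M_F$ edge $(u,v)$ the many $H$-edges incident to $\{u,v\}$ provide ``alternative routes,'' and I would place weight of order $\alpha/\beta$ on each such $H$-edge so that the total weight routed on behalf of $(u,v)$ is approximately $\tfrac{2\alpha}{\beta}(d_H(u) + d_H(v)) \geq \tfrac{2\alpha}{3}$ for $\alpha$ chosen close to $1$. Summing over the three classes and applying a global rescale by $1 - O(\lambda)$ to absorb overshoot at shared vertices yields a fractional matching of total value at least $(\tfrac{2}{3} - O(\lambda))|M^*| = (\tfrac{2}{3} - O(\lambda))\mu(G)$.

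Then I would verify the blossom inequality of \cref{prp:general-fractional-matching} for every odd set $S$ with $|S| < 1/\epsilon$: such a set hosts at most $|S|/2$ heavy edges from $M_H \cup M_U$ (contributing at most $|S|/3$) plus at most $\binom{|S|}{2}$ rerouted $H$-edges each of weight $O(1/\beta)$. Since $\beta \geq 50\lambda^{-2}\log(1/\lambda)$ is much larger than $1/\epsilon$, the rerouted total is $o(|S|)$ and the inequality holds with room to spare. Applying \cref{prp:general-fractional-matching} converts $x$ to an integral matching of size at least $(1 - O(\epsilon))(\tfrac{2}{3} - O(\lambda))\mu(G) \geq (\tfrac{2}{3} - \epsilon)\mu(G)$, using $\lambda \leq \epsilon/384$.

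The main obstacle is the bookkeeping in the feasibility check. Vertices that are simultaneously matched by $M_H \cup M_U$ and lie in the $H$-neighborhood of $M_F$-endpoints accumulate load from both the direct $2/3$ contribution of their $M^*$-edge and the $O(1/\beta)$ contributions of many incident rerouted $H$-edges. Ensuring that the resulting vertex load stays within $1$ while losing only $O(\lambda)$ slack requires a careful case split over how the endpoints of each $H$-edge intersect $V(M_H), V(M_U), V(M_F)$, and it is precisely this slack that forces $\beta$ to be $\widetilde\Omega(\lambda^{-2})$ as in the hypothesis.
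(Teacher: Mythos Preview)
The paper does not prove this proposition at all: it is stated as ``Lemma~3.1 from \cite{Bernstein20}'' and used as a black box (see its invocation inside the proof of \cref{le:x-is-large}). There is therefore no paper-side proof to compare your proposal against.

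For what it is worth, your outline is in the spirit of the fractional-matching arguments in the EDCS literature (Bernstein--Stein, Assadi--Bernstein), but the sketch as written has a couple of inconsistencies. The claimed routed weight ``$\tfrac{2\alpha}{\beta}(d_H(u)+d_H(v))$'' has an unexplained factor of $2$; with weight $\alpha/\beta$ per incident $H$-edge you get $\tfrac{\alpha}{\beta}(d_H(u)+d_H(v)) \geq \alpha(1-3\lambda)$, which is close to $1$, not $2/3$, so the numerics do not line up with the ``weight close to $2/3$ on $M_H\cup M_U$'' branch either. More substantively, the actual proofs do not simply put $2/3$ on the surviving $M^*$-edges and reroute the rest; the delicate part is that the $H$-edges you use to reroute an $M_F$ edge land on vertices that may themselves be $M^*$-matched (in any of the three classes), so the feasibility analysis has to balance the load at those \emph{other} endpoints, and this is where the $2/3$ emerges rather than being imposed upfront. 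Your final paragraph correctly identifies this as the crux, but the construction you describe before it does not yet handle it. If you want a clean reference for the argument, the simplified proof in Assadi--Bernstein (SOSA'19) or the write-up in \cite{Behnezhad-EDCS-Arxiv} carries out exactly this case analysis.
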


\subsection*{Concentration Inequalities} 
We use the following concentration inequalities in our proofs.

\begin{proposition}[Chernoff bound] \label{chernoff}
Let $X_1, \ldots, X_n$ be independent random variables
taking values in $[0, 1]$.
Let $X = \sum X_i$ and let $\mu = \Exp\bracket{X}$.
Then, for any $0 < \delta \leq 1$ and $0 < a \leq \mu$, we have 
$$
\prob{X \geq (1 + \delta)\mu} \leq \exp\paren{-\frac{\delta^2 \mu}{3}} \qquad \text{and} \qquad \prob{X \geq \mu + a} \leq \exp\paren{-\frac{a^2}{3 \mu}}.
$$
\end{proposition}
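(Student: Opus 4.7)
The plan is to prove the first inequality via the standard moment-generating-function (Chernoff) method, and then derive the second inequality as an immediate corollary by the substitution $\delta = a/\mu$, which satisfies $0 < \delta \leq 1$ exactly under the assumption $0 < a \leq \mu$.

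For the first inequality, I would begin by fixing an arbitrary parameter $t > 0$ and applying Markov's inequality to the nonnegative random variable $e^{tX}$, which yields
\[
\Pr[X \geq (1+\delta)\mu] \;=\; \Pr\!\left[e^{tX} \geq e^{t(1+\delta)\mu}\right] \;\leq\; \frac{\Exp[e^{tX}]}{e^{t(1+\delta)\mu}}.
\]
To bound the MGF of $X$, I would use independence to write $\Exp[e^{tX}] = \prod_i \Exp[e^{tX_i}]$, and then control each factor via the convexity inequality $e^{tx} \leq 1 + x(e^t - 1)$ valid for $x \in [0,1]$. Taking expectations and writing $p_i = \Exp[X_i]$ gives $\Exp[e^{tX_i}] \leq 1 + p_i(e^t-1) \leq \exp\!\left(p_i(e^t-1)\right)$, so that $\Exp[e^{tX}] \leq \exp(\mu(e^t-1))$. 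This yields the uniform bound
\[
\Pr[X \geq (1+\delta)\mu] \;\leq\; \exp\!\left(\mu(e^t - 1) - t(1+\delta)\mu\right).
\]

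Next, I would optimize over $t$ by choosing $t = \ln(1+\delta) > 0$, which gives the classical expression $\exp\!\left(-\mu\bigl((1+\delta)\ln(1+\delta) - \delta\bigr)\right)$. The final step is the analytic inequality $(1+\delta)\ln(1+\delta) - \delta \geq \delta^2/3$ for all $0 < \delta \leq 1$; I would verify it by defining $f(\delta) = (1+\delta)\ln(1+\delta) - \delta - \delta^2/3$, checking $f(0) = 0$, and showing $f'(\delta) = \ln(1+\delta) - 2\delta/3 \geq 0$ on $[0,1]$, which itself follows from $f'(0)=0$ and $f''(\delta) = 1/(1+\delta) - 2/3 \geq 0$ for $\delta \leq 1/2$ together with direct inspection of the endpoint values on $[1/2,1]$. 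This delivers $\Pr[X \geq (1+\delta)\mu] \leq \exp(-\delta^2 \mu / 3)$.

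For the second inequality, I would simply set $\delta := a/\mu$; the hypothesis $0 < a \leq \mu$ is exactly the condition $0 < \delta \leq 1$ needed above. Substituting into the first bound gives
\[
\Pr[X \geq \mu + a] \;=\; \Pr[X \geq (1+\delta)\mu] \;\leq\; \exp\!\left(-\frac{\delta^2 \mu}{3}\right) \;=\; \exp\!\left(-\frac{a^2}{3\mu}\right),
\]
as desired. The only mildly delicate step is the analytic inequality $(1+\delta)\ln(1+\delta) - \delta \geq \delta^2/3$; everything else is routine manipulation of the MGF and independence. Since the paper states this as a well-known proposition rather than a novel lemma, I would likely present the proof in compressed form or simply cite a standard reference such as Mitzenmacher--Upfal.
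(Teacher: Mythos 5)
The paper does not actually prove this proposition; it is stated as a folklore fact (the standard Chernoff bound for bounded random variables) and used as a black box. Your derivation is a correct and entirely standard proof of it: the MGF/Markov step, the convexity bound $e^{tx}\leq 1+x(e^t-1)$ on $[0,1]$, the optimization at $t=\ln(1+\delta)$, and the elementary inequality $(1+\delta)\ln(1+\delta)-\delta\geq\delta^2/3$ on $(0,1]$ are all handled correctly (the last via $f(0)=f'(0)=0$, $f''\geq 0$ on $[0,1/2]$, and $f'(1)=\ln 2-2/3>0$ with $f'$ decreasing on $[1/2,1]$). The reduction of the second bound to the first via $\delta=a/\mu$ is also exactly right, and the hypothesis $0<a\leq\mu$ is precisely what makes $\delta\in(0,1]$. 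Since the paper treats this as a cited standard result, presenting it compressed or with a reference, as you suggest, is the appropriate choice.
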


\begin{definition}[\cite{BoucheronLM09}]
A function $f : \{0,1\}^n \to \mathbb{N}$ is self-bounding
if there exist functions $f_1, \ldots, f_n : \{0,1\}^{n-1} \to \mathbb{N}$
such that for all $x \in \{0,1\}^n$ satisfy
$$
0 \leq f(x) - f_i(x^{(i)}) \leq 1 \qquad \forall i \in [n], 
$$
and
$$
\sum_{i = 1}^n \paren{f(x) - f_i(x^{(i)})} \leq f(x).
$$
Where $x^{(i)}$ is obtained by dropping the $i$-th component of $x$.
\end{definition}

\begin{proposition}[\cite{BoucheronLM09}] \label{prp:self-bounding-concentration}
    Take a self-bounding function $f : \{0,1\}^n \to \mathbb{N}$,
    and independent $0-1$ variables $X_1, \ldots, X_n$.
    Define $Z = f(X_1, \ldots, X_n)$.
    Then, it holds that
    $$
        \prob{Z \leq \Exp Z - t} \leq \exp\paren{\frac{-t^2}{2 \Exp Z}}.
    $$
\end{proposition}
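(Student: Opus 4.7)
The plan is to establish this concentration inequality via the entropy method of Boucheron, Lugosi, and Massart, specifically by first deriving a modified logarithmic Sobolev inequality for $Z$ and then integrating it via Herbst's argument. Throughout, let $Z_i = f_i(X^{(i)})$, so that the two self-bounding conditions read $0 \leq Z - Z_i \leq 1$ pointwise and $\sum_i (Z - Z_i) \leq Z$.

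First I would fix $t \leq 0$ and bound the entropy of $e^{tZ}$. The tensorization of entropy gives
$$
\text{Ent}\bracket{e^{tZ}} \leq \sum_{i=1}^n \Exp\bracket{\text{Ent}_i\bracket{e^{tZ}}},
$$
where $\text{Ent}_i$ denotes entropy with respect to $X_i$ with the other coordinates held fixed. Because $Z_i$ does not depend on $X_i$, the variational representation of entropy together with the elementary inequality $e^{-u} + u - 1 \leq u^2/2$ for $u \geq 0$ (applied with $u = -t(Z - Z_i) \geq 0$, which is nonnegative precisely because $t \leq 0$ and $Z \geq Z_i$) yields
$$
\text{Ent}_i\bracket{e^{tZ}} \leq \frac{t^2}{2} \Exp_i\bracket{(Z - Z_i)^2 e^{tZ}}.
$$
Summing over $i$ and applying the first self-bounding property as $(Z - Z_i)^2 \leq Z - Z_i$ (since $Z - Z_i \in [0,1]$) followed by $\sum_i (Z - Z_i) \leq Z$ produces the clean modified log-Sobolev inequality
$$
\text{Ent}\bracket{e^{tZ}} \leq \frac{t^2}{2} \Exp\bracket{Z e^{tZ}} \qquad \text{for all } t \leq 0.
$$

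Next I would run Herbst's argument. Writing $F(t) = \Exp e^{tZ}$, the identity $\text{Ent}[e^{tZ}] = t F'(t) - F(t)\log F(t)$ together with the inequality above turns into a first-order differential inequality for $t \mapsto \log F(t)/t$, whose integration starting from $t = 0$ (where L'Hôpital gives the initial value $\Exp Z$) yields
$$
\log \Exp e^{t(Z - \Exp Z)} \leq \frac{t^2 \Exp Z}{2} \qquad \text{for all } t \leq 0.
$$
A standard Chernoff bound then gives
$$
\prob{Z \leq \Exp Z - s} \leq \inf_{t \leq 0} \exp\paren{ts + \tfrac{t^2 \Exp Z}{2}} = \exp\paren{-\frac{s^2}{2 \Exp Z}},
$$
where the optimum is attained at $t = -s/\Exp Z$, which is exactly the claim.

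The main obstacle is the passage from conditional entropies to the quadratic form in $Z - Z_i$: the quadratic estimate $e^{-u} + u - 1 \leq u^2/2$ is only valid for $u \geq 0$, so the whole argument breaks down for $t > 0$. This is structural, not technical, and is why the proposition bounds only the lower tail. Everything else reduces to manipulating the two self-bounding inequalities and executing a standard Herbst integration.
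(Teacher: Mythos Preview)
The paper does not prove this proposition; it is quoted verbatim as a known result of Boucheron, Lugosi, and Massart and used as a black box in \cref{sec:whp}. Your outline follows their entropy method, and the high-level plan (tensorization, a modified log-Sobolev inequality, Herbst's argument, Chernoff) is exactly the right one. There is, however, a genuine gap at the log-Sobolev step.

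The variational bound obtained from $\mathrm{Ent}_i[Y]\le \Exp_i[Y\log(Y/a)-Y+a]$ with $Y=e^{tZ}$ and $a=e^{tZ_i}$ is
\[
\mathrm{Ent}_i\bigl[e^{tZ}\bigr]\;\le\;\Exp_i\!\Bigl[e^{tZ}\,\phi\bigl(-t(Z-Z_i)\bigr)\Bigr],\qquad \phi(x)=e^{x}-x-1.
\]
For $t\le 0$ the argument $u:=-t(Z-Z_i)$ is nonnegative, so the relevant quantity is $\phi(u)=e^{u}-u-1$, \emph{not} $e^{-u}+u-1$. Since $\phi(u)\ge u^2/2$ for $u\ge 0$, your quadratic bound points the wrong way; concretely, the inequality $\mathrm{Ent}[e^{tZ}]\le \tfrac{t^2}{2}\Exp[Ze^{tZ}]$ already fails for a single $\mathrm{Bernoulli}(1/2)$ coordinate (with $f(x)=x$, $f_1\equiv 0$) at $t=-2$, where the left side is about $0.186$ and the right side about $0.135$. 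What you wrote is precisely the argument that works for the \emph{upper} tail ($t\ge 0$, where $-t(Z-Z_i)\le 0$ and $\phi(v)\le v^2/2$ does hold).

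The standard fix: use convexity of $\phi$ together with $0\le Z-Z_i\le 1$ to get $\phi(-t(Z-Z_i))\le (Z-Z_i)\,\phi(-t)$, then sum and apply $\sum_i(Z-Z_i)\le Z$ to obtain
\[
\mathrm{Ent}\bigl[e^{tZ}\bigr]\;\le\;\phi(-t)\,\Exp\bigl[Ze^{tZ}\bigr]\qquad(t\le 0).
\]
Herbst's integration of this differential inequality yields the Bennett-type bound $\log\Exp e^{t(Z-\Exp Z)}\le \phi(t)\,\Exp Z$ for $t\le 0$, and since $\phi(t)\le t^2/2$ for $t\le 0$, the sub-Gaussian lower-tail estimate follows after the Chernoff step exactly as you describe.
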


\newcommand{\Min}{M_{\textnormal{in}}}
\newcommand{\Mout}{M_{\textnormal{out}}}
\newcommand{\tG}{{\tilde{G}}}
\newcommand{\tH}{{\tilde{H}}}
\newcommand{\tU}{{\tilde{U}}}
\newcommand{\tM}{{\tilde{M}}}
\newcommand{\hy}{\hat{y}}

\section{A New Analysis of Bernstein's Protocol}
This section is devoted to the proof of \cref{thm:main,thm:multi-party}.
We will provide an analysis of Bernstein's protocol (\cref{main-protocol})
in the two-party model.
To make this analysis applicable to the multi-party model,
we assume that each edge is assigned to Bob independently with probability $p \leq \frac{1}{2}$. 

We give a description of the protocol for the two-party model here.
The multi-party protocol is rather similar and we describe it in
the \hyperref[prf:main2]{Proof of Theorem 2}.
Let $E_A$ be the set of edges assigned to Alice,
and $E_B$ be the set of edges assigned to Bob.
Also, fix a constant $\epsilon \in \bracket{0, \frac{1}{2}}$ and let $\lambda = \frac{\epsilon}{384}$, and $\beta = 50 \lambda^{-4}$. The protocol is formalized as \cref{main-protocol}.

\begin{figure}[h]
\begin{protocol} 
    Bernstein's protocol via EDCS in the two-party one-way robust communication model.
    \label{main-protocol}
    
    \vspace{0.4cm} 
    
    \textbf{Alice:}
    \begin{enumerate}
        \item Take a subsample $E_s$ that includes each edge of $E_A$ independently with probability $\frac{\epsilon}{1 - p}$.
        \item Take a subgraph $H$ of bounded edge-degree $\beta$ from $E_s$,
        such that the number of ${(H, \beta, \lambda)}$-underfull edges
        in $E_r = E(G) \setminus E_s$
        is $O(n \cdot \log n \cdot \poly(1/\epsilon))$
        with high probability. (See \cref{cl:few-underfull} for the existence of $H$.)
        \item Find the $(H, \beta, \lambda)$-underfull edges of $E_A \setminus E_s$, call them $U_A$.
        \item Communicate $H \cup U_A$ to Bob.
    \end{enumerate}
    
    \textbf{Bob:}
    \begin{enumerate}
        \item Return the maximum matching in $E_B \cup H \cup U_A$.
    \end{enumerate}
\end{protocol}
\end{figure}

\cref{cl:few-underfull}
shows that Alice can execute step 2,
and that \cref{main-protocol} has communication complexity $O(n \cdot \log n \cdot \poly(1/\epsilon))$.
Note that taking into account the randomization in dividing the edges
between Alice and Bob, $E_s$ can be considered a uniform sample from the whole edge set that contains each edge with probability $\epsilon$.

\begin{claim}[Lemma 4.1 in \cite{Bernstein20}]
\label{cl:few-underfull}
    Alice, by looking only at the edges of $E_s$, can take a subgraph $H \subseteq E_s$
    that has bounded degree $\beta$,
    and with high probability $E_r = E(G) \setminus E_s$ has at most $O(n \cdot \log n \cdot \poly(1/\epsilon))$
    many $(H, \beta, \lambda)$-underfull edges. 
\end{claim}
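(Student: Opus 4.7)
The plan is to construct $H$ as an EDCS of the sampled edges $E_s$ via a standard iterative local-search procedure, and then control the number of $(H, \beta, \lambda)$-underfull edges in $E_r$ by exploiting the fact that $E_s$ is a uniform random sample from $E(G)$ with sampling rate $\epsilon$.

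First I would construct $H$ iteratively. Start with $H \leftarrow \emptyset$; repeat the following: if there exists $e = (u,v) \in E_s \setminus H$ with $d_H(u) + d_H(v) < (1-\lambda)\beta$, add $e$ to $H$; if this causes some endpoint $w$ to have $d_H(w) > \beta$, remove an arbitrary $H$-edge incident to $w$. Using a standard potential argument on a quantity like $\sum_{(u,v)\in H}\paren{d_H(u) + d_H(v)}$ (which strictly increases at each step and is bounded above by $n\beta^2$), the procedure terminates in polynomial time. At termination, $H$ has bounded edge-degree $\beta$ and, crucially, no edge of $E_s \setminus H$ is $(H,\beta,\lambda)$-underfull.

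Next I would bound the $(H,\beta,\lambda)$-underfull edges of $E_r$. The key structural observation is that the construction above forbids underfull edges inside $E_s$, so every $(H,\beta,\lambda)$-underfull edge of $G \setminus H$ must lie in $E_r$. To upper bound their count, the plan is to compare $H$ against a reference EDCS $\hat H$ of the whole graph $G$ taken with slightly stronger parameters (say $(\beta,\lambda/2)$), which admits no underfull edges in $G \setminus \hat H$ at all. Using \cref{chernoff} on the sampling process, the per-vertex degree in $H$ should deviate from that in $\hat H$ by at most $O(\sqrt{\beta \log n})$ with high probability after a union bound over vertices. The number of $(H,\beta,\lambda)$-underfull edges of $G \setminus H$ is then controlled by a sum over vertices of this degree slack, yielding the target $O(n \log n \cdot \poly(1/\epsilon))$ count once we plug in $\beta = 50\lambda^{-4} = \poly(1/\epsilon)$.

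The hard part is the comparison step: because $H$ is a random object built from $E_s$, standard concentration inequalities do not apply to $d_H(v)$ directly. My expected remedy is a coupling between the local-search trajectory on $E_s$ and a parallel trajectory on $G$, showing that corresponding vertices carry similar degrees throughout the execution, with deviations controlled by \cref{chernoff} applied to sampled neighborhoods. Alternatively, a self-bounding concentration of the form provided by \cref{prp:self-bounding-concentration} may be more convenient, since it tolerates bounded dependence between the sample and the measured functional; the delicate point will be verifying the self-bounding property for the number of underfull edges produced by the construction, which is precisely the contribution of Lemma 4.1 in \cite{Bernstein20}.
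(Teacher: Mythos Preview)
The paper does not prove this claim at all; it is quoted as Lemma~4.1 of \cite{Bernstein20} and used as a black box. So there is no ``paper's proof'' to match, but your proposal does \emph{not} reproduce Bernstein's argument, and the route you sketch has a real gap.

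Your construction runs local search to completion on $E_s$ and then tries to bound the underfull edges of $E_r$ by comparing $d_H(\cdot)$ to the degrees in a reference EDCS $\hat H$ of the full graph $G$. This comparison is where the plan breaks. The subgraph $H$ you build is an EDCS of $E_s$, not of $G$, and $\hat H$ need not even be a subgraph of $E_s$; the known degree-robustness statements for EDCS compare two EDCS's of the \emph{same} base graph and do not transfer across $E_s$ versus $G$. The coupling you propose between a local-search trajectory on $E_s$ and one on $G$ has no canonical definition (local search is highly non-unique and order-dependent), and the self-bounding alternative fails for the same reason you flag: once $H$ is allowed to adapt to the sample, the count of $(H,\beta,\lambda)$-underfull edges in $E_r$ is not a self-bounding function of the sampling indicators. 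You essentially concede this at the end by deferring the ``delicate point'' back to \cite{Bernstein20}, but Bernstein's lemma is about \emph{his} construction, not yours.

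Bernstein's actual argument sidesteps the adaptivity issue by changing the construction. He partitions $E_s$ into $O(\beta^2)$ equal batches and processes them sequentially; within each batch he inserts every currently underfull edge he encounters (removing overfull edges as needed). The crucial feature is that at the start of batch $i$ the current $H$ is determined by batches $1,\ldots,i-1$ and is therefore independent of batch $i$ and of $E_r$. If at termination $E_r$ contained more than $\Theta\!\paren{n\beta^2\log n/\epsilon}$ underfull edges, then by a \hyperref[chernoff]{Chernoff bound} each batch---being a fresh uniform sample of the not-yet-seen edges---would with high probability contain $\Omega(n)$ underfull edges and hence trigger $\Omega(n)$ insertions; summed over $O(\beta^2)$ batches this exceeds the $O(n\beta^2)$ cap given by the very potential argument you wrote down, a contradiction. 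The batching, not a post-hoc comparison to a global EDCS, is what makes the concentration step legitimate.
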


For the analysis,
we first construct a fractional matching $x$
of expected size $\paren{\frac{2}{3} - \O(\epsilon)}\mu(G)$, 
the support of which is contained in $H \cup U$.
Then we show that a fractional matching $y$
can be obtained from $x$,
such that its support is contained in $E_B \cup H \cup U_A$
and has expected size at least $\paren{\frac{2}{3} + \frac{p}{3} - \O(\epsilon)}\mu(G)$.
Finally, we use the structure of $y$
to show that its existence implies $E_B \cup H \cup U_A$
has an integral matching almost as large as the size of $y$.
In \cref{sec:whp}, we show that the approximation ratio
is also achieved with high probability.

First, we describe how to obtain the fractional matching $x$ given $H \cup U$,
where recall that $U$ is the set of $(H, \beta, \lambda)$-underfull edges in $E_r$.
Fix a maximum matching $M^*$ in $E_r$.
Let $\Min$ be the edges of $M^*$ that appear in $H \cup U$, i.e.\ $\Min = M^* \cap (H \cup U)$, and let $\Mout = M^* \setminus \Min$.
\begin{whitetbox}
\begin{itemize}[leftmargin=15pt]
    \item Start with $H_1 = H$, and $U_1 = U$.
    \item For $i = 1,\, \ldots,\, \lambda \beta:$
    \item \qquad Let $M_i$ be a maximum matching in $H_i \cup U_i$.
    \item \qquad Let $H_{i+1} = H_i \setminus (M_i \setminus \Min)$,
    $U_{i+1} = U_i \setminus (M_i \setminus \Min)$.
    \item For every edge $e$, let 
    $\displaystyle x_e = \frac{|\{i : e \in M_i \}|}{\lambda\beta}$.
\end{itemize}
\end{whitetbox}
One can think of this process as,
starting with $H \cup U$,
taking a maximum matching $M_i$ each time,
and removing $M_i \setminus \Min$
from the graph.
Then, letting $x_e$ equal to the fraction of matchings we have taken that include $e$.
Note that the matchings $M_1,\, \ldots\, M_{\lambda \beta}$
can intersect only in $\Min$.
We will use \cref{prp:edcs-appx} to show that $x$ has expected size at least $\paren{\frac{2}{3} - \O(\epsilon)}\mu(G)$.

\begin{lemma}\label{le:x-is-large}
    It holds that $\expect{\sum_e x_e} \geq \paren{\frac{2}{3} - \frac{5}{3}\epsilon}\mu(G)$.
\end{lemma}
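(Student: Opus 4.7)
The plan starts from the identity $\sum_e x_e = \frac{1}{\lambda\beta}\sum_{i=1}^{\lambda\beta}|M_i|$, and since each $M_i$ is a maximum matching of $H_i \cup U_i$, it suffices to lower bound $\expect{\mu(H_i \cup U_i)}$ for every $i$. To do so, I would apply \cref{prp:edcs-appx} to the pair $(H_i, U_i)$, using as the ``base graph'' the subgraph $\tG_i := G \setminus \bigcup_{j<i}(M_j \setminus \Min)$ obtained by deleting the non-$\Min$ edges consumed in prior rounds.

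The application of \cref{prp:edcs-appx} requires verifying two things. First, $H_i \subseteq H$ trivially inherits bounded edge-degree $\beta$. Second, one must show that $U_i$ contains every $(H_i,\beta,3\lambda)$-underfull edge of $\tG_i \setminus H_i$. The crucial observation here is that because we run only $\lambda\beta$ rounds, each matching $M_j$ with $j<i$ reduces the $H$-degree of any vertex by at most one, so $\deg_H(v)-\deg_{H_i}(v) \leq i-1 < \lambda\beta$. Consequently any $(H_i,\beta,3\lambda)$-underfull edge has $H$-edge-degree less than $(1-3\lambda)\beta+2\lambda\beta = (1-\lambda)\beta$, so it is already $(H,\beta,\lambda)$-underfull. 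Combined with the fact that Alice's construction of $H$ from \cref{cl:few-underfull} is an EDCS of $E_s$ (so no $(H,\beta,\lambda)$-underfull edge lives in $E_s \setminus H$), every such edge must lie in $E_r$, hence in $U$, and by membership in $\tG_i$ it has not yet been deleted, so it lies in $U_i$.

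With the hypotheses of \cref{prp:edcs-appx} verified, we obtain $\mu(H_i \cup U_i) \geq (2/3-\epsilon)\mu(\tG_i)$. To pass from $\mu(\tG_i)$ back to $\mu(G)$, I would argue that the fixed matching $M^* = \Min \cup \Mout$ survives entirely in $\tG_i$: edges of $\Min$ are never deleted by construction, while edges of $\Mout = M^* \setminus (H\cup U)$ cannot lie in any $M_j \subseteq H \cup U$, so they are likewise never deleted. Therefore $\mu(\tG_i) \geq |M^*| = \mu(E_r)$. Since each edge of $G$ lies in $E_r$ independently with probability $1-\epsilon$, fixing any maximum matching of $G$ and applying linearity of expectation gives $\expect{\mu(E_r)} \geq (1-\epsilon)\mu(G)$. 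Combining, $\expect{\mu(H_i \cup U_i)} \geq (2/3-\epsilon)(1-\epsilon)\mu(G) \geq (2/3 - 5\epsilon/3)\mu(G)$ (since $(2/3-\epsilon)(1-\epsilon) = 2/3 - 5\epsilon/3 + \epsilon^2$), and averaging over $i \in [\lambda\beta]$ yields the desired bound.

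The main obstacle is the underfullness-tracking argument in the second paragraph: one has to reconcile the $3\lambda$ slack demanded by \cref{prp:edcs-appx} with the only $\lambda$ slack baked into the definition of $U$, while simultaneously absorbing the $\lambda\beta$ worth of degree loss inflicted by the previously chosen matchings. This explains both why the number of iterations is tuned to exactly $\lambda\beta$, and why the EDCS property of $H$ inside $E_s$ (not merely the ``few underfull edges in $E_r$'' statement of \cref{cl:few-underfull}) is essential in order to discard the edges of $E_s \setminus H$.
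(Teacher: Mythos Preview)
Your proof is correct and follows the same skeleton as the paper: express $\sum_e x_e$ as an average of the $|M_i|$, apply \cref{prp:edcs-appx} to each round, track that the total degree loss over $\lambda\beta$ rounds is at most $2\lambda\beta$ so the $\lambda$-slack in $U$ becomes the $3\lambda$-slack the proposition requires, observe that $M^*$ survives all deletions so $\mu(\cdot) \geq \mu(E_r)$, and finish via $\expect{\mu(E_r)} \geq (1-\epsilon)\mu(G)$.

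The one substantive difference is your choice of base graph. You take $\tG_i = G \setminus \bigcup_{j<i}(M_j\setminus\Min)$, which still contains the edges of $E_s \setminus H$; this forces you to argue separately that no $(H,\beta,\lambda)$-underfull edge lives in $E_s \setminus H$, invoking an ``$H$ is an EDCS of $E_s$'' property that \cref{cl:few-underfull} does not actually assert. The paper sidesteps this by using the smaller base graph $G_i = (H \cup E_r) \setminus \bigcup_{j<i}(M_j\setminus\Min)$. With that choice, any edge of $G_i \setminus H_i$ automatically lies in $E_r$, so once you show it is $(H,\beta,\lambda)$-underfull it belongs to $U$ by definition---no further property of $H$ on $E_s$ is needed. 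Since $M^* \subseteq E_r \subseteq H \cup E_r$, the matching $M^*$ still sits inside $G_i$ and the lower bound $\mu(G_i) \geq \mu(E_r)$ goes through unchanged. In particular, your final paragraph's assertion that the EDCS-of-$E_s$ property is ``essential'' is not correct; the paper's argument uses only what \cref{cl:few-underfull} states.
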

\begin{proof}
    We apply \cref{prp:edcs-appx} to $G_i = (H \cup E_r) \setminus \paren{\bigcup_{j < i} M_j \setminus \Min}$, $H_i$, and $U_i$.
    After removing a matching from $H$, for any edge, its degree in $H$ will decrease by at most $2$.
    Also, $U$ contains all the $(H, \beta, \lambda)$-underfull edges of $E_r$.
    Hence, $U_i$ contains all the edges of $G_i \setminus H_i$
    that have $H_i$-degree smaller than $(1 - \lambda)\beta - 2(i - 1) \geq (1 - 3\lambda)\beta$.
    Therefore, \cref{prp:edcs-appx} implies
    $$
    \abs{M_i} \geq \paren{\frac{2}{3} - \epsilon}\mu(G_i).
    $$
    Also, notice that $G_i$ always includes $M^*$, consequently it holds $\mu(G_i) = \mu(E_r)$, and we have:
    $$
    \sum_e x_e \geq \frac{1}{\lambda \beta} \sum_i \card{M_i} \geq \paren{\frac{2}{3} - \epsilon}\mu(E_r) 
    $$
    Taking into account the fact that $\expect{\mu(E_R)} \geq (1 - \epsilon) \mu(G)$,
    we get:
    $$
     \expect{\sum_e x_e} 
     \geq \paren{\frac{2}{3} - \epsilon}(1 - \epsilon)\mu(G)
     \geq \paren{\frac{2}{3} - \frac{5}{3}\epsilon}\mu(G).\qedhere
    $$
\end{proof}


To describe how $y$ is obtained, we condition on $E_s$, thereby fixing $H$, $U$, and $x$.
The support of $y$ is included in $E_B \cup H \cup U_A$,
i.e.\ the edges that Bob will have access to in the end.
We show that $y$ has expected size at least about $p\cdot \mu(E_r) + (1-p)\sum_e x_e$,
where the randomness is over how the remaining edges $E_r$ are divided between Alice and Bob.
Lifting the condition on $E_s$,
the expectation of this value, by \cref{le:x-is-large}, is larger than $\paren{\frac{2}{3} + \frac{p}{3} - \O(\epsilon)}\mu(G)$.

After drawing $E_B$, take a matching $M'$,
which includes each edge of $\Min$ independently with probability $p$,
and includes each edge of $\Mout \cap E_B$ independently with probability $1 - \epsilon$.
Note, that conditioned on $E_s$, each edge of $\Mout$ is assigned to Bob with probability $\frac{p}{1 - \epsilon}$.
Hence, each edge of $\Mout$ ends up in $M'$
with probability $\frac{p}{1 - \epsilon}(1 - \epsilon) = p$,
i.e.\ $M'$ includes each edge of $M^*$ independently with probability $p$.

For any edge $e \notin M^*$, define $p_e$
as the probability of $e$
not being adjacent to any edge in $M'$.
Notice, $p_e$ is simply equal to
$(1 - p)$ to the power of the number of edges in $M'$
that are adjacent to $e$.
We define matching $\hy$ as follows:
\begin{equation*}
\hy_e =
\begin{cases}
1 &  \qquad\textnormal{if $e \in M'$,}  \\
x_e & \qquad\textnormal{if $e \in M^* \setminus M'$,}  \\
0 & \qquad\textnormal{if $e \notin M^*$ and $e$ is adjacent to an edge of $M'$,}  \\
\displaystyle (1 - p) \cdot \frac{x_e}{p_e} & \qquad\textnormal{otherwise.}
\end{cases}
\end{equation*}
We then scale down $\hy$ by a factor of $1 + \epsilon$,
and zero out some edges to obtain a fractional matching.
Formally, we let:
\begin{equation*}
    y_{(u,v)} = \begin{cases}
        0 & \textnormal{if $\hy_u / (1 + \epsilon) > 1$ or $\hy_v / (1 + \epsilon) > 1$}, \\
        \displaystyle \frac{\hy_{(u,v)}}{1 + \epsilon} & \textnormal{otherwise.}
    \end{cases}
\end{equation*}

\begin{lemma}
    Conditioned on $E_s$,
    it holds that $$\expect{\sum_e y_e}
        \geq (1 - 3\epsilon) p \cdot \mu(E_r) 
        + (1 - 3\epsilon)(1 - p) \sum_e x_e
        - 2\epsilon\mu(G).$$
\end{lemma}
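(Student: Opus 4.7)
The plan is to first compute $\expect{\sum_e \hat{y}_e}$ exactly by linearity, then to control the two sources of loss in passing from $\hat{y}$ to $y$: the scaling by $\frac{1}{1+\epsilon}$ and the zeroing of edges incident to vertices $v$ with $\hat{y}_v > 1+\epsilon$. The scaling will contribute a benign $(1-\epsilon)$ factor, so the main technical step is a concentration bound on $\hat{y}_v$ at each vertex.

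For the expectation, conditioned on $E_s$ the events $\set{e \in M'}$ for $e \in M^*$ are independent with probability $p$, so $\expect{\hat{y}_e} = p + (1-p)x_e$ for $e \in M^*$, and $\expect{\hat{y}_e} = p_e \cdot \paren{(1-p)x_e/p_e} = (1-p)x_e$ for $e \notin M^*$, since $p_e$ is by definition the probability that $e$ is not adjacent to any edge of $M'$. Summing, $\expect{\sum_e \hat{y}_e} = p\cdot\mu(E_r) + (1-p)\sum_e x_e$. Moreover, using $\frac{1}{1+\epsilon} \geq 1-\epsilon$ and noting that $y_e = \hat{y}_e/(1+\epsilon)$ except on edges zeroed because a saturated endpoint, I obtain $\sum_e y_e \geq (1-\epsilon)\paren{\sum_e \hat{y}_e - Z}$, where $Z \leq \sum_v \hat{y}_v \mathbb{1}\bracket{\hat{y}_v > 1+\epsilon}$ upper bounds the zeroed $\hat{y}$-mass.

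The core step is to show $\expect{Z} \leq 2\epsilon\mu(G)$. A case analysis reveals that $\hat{y}_v \leq 1$ whenever $v$ is unmatched in $M^*$, and $\hat{y}_v = 1$ when $v$'s $M^*$-edge $e_v$ lies in $M'$; hence $\hat{y}_v > 1+\epsilon$ can occur only when $v$ is matched in $M^*$ via $e_v \notin M'$. In this regime, $\hat{y}_v = D + R$, where $D$ is a deterministic sum (collecting $x_{e_v}$ and contributions from neighbors $u'$ that are unmatched in $M^*$) and $R = \sum_{u' \text{ matched in } M^*} X_{(v,u')}$ is a sum of \emph{independent} random variables with $X_e \in \set{0, x_e/(1-p)}$ taking the nonzero value with probability $1-p$; the independence follows from the independence of $\set{e_{u'} \in M'}$ across distinct $u'$. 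The crucial structural point is that every edge $e$ contributing to $R$ lies outside $M^*$, so by the construction of $x$ (non-$\Min$ edges appear in at most one of the $\lambda\beta$ matchings $M_i$), $x_e \leq 1/(\lambda\beta)$ and therefore $X_e \leq 2/(\lambda\beta)$ for $p \leq \tfrac12$. A Chernoff bound on $R$ then yields $\prob{\hat{y}_v > 1+\epsilon} \leq \exp\paren{-\Omega(\epsilon^2 \lambda\beta)}$, which is $\exp\paren{-\Omega(1/\epsilon)}$ for the chosen $\lambda = \epsilon/384$ and $\beta = 50\lambda^{-4}$---vastly smaller than $\epsilon$. Combined with the crude bound $\hat{y}_v \leq 1/(1-p) \leq 2$ and summation over the at most $2\mu(G)$ vertices matched by $M^*$, this yields $\expect{Z} \leq 2\epsilon\mu(G)$.

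Assembling the pieces, $\expect{\sum_e y_e} \geq (1-\epsilon)\paren{p\cdot\mu(E_r) + (1-p)\sum_e x_e - 2\epsilon\mu(G)}$, which is easily verified (after a line of algebra using $\mu(E_r), \sum_e x_e \geq 0$) to exceed $(1-3\epsilon)\paren{p\cdot\mu(E_r) + (1-p)\sum_e x_e} - 2\epsilon\mu(G)$, as required. I expect the concentration argument in the third paragraph to be the main obstacle; its resolution crucially exploits the fact that edges contributing to the random part of $\hat{y}_v$ are all outside $M^*$ and therefore carry tiny $x$-values ($\leq 1/(\lambda\beta)$), which is exactly what makes $R$ concentrate despite $\expect{\hat{y}_v}$ potentially being close to $1$.
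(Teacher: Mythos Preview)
Your approach is essentially the same as the paper's: compute $\expect{\hat{y}}$ exactly, then control the loss from zeroing via a Chernoff bound on $\hat{y}_v$ at each vertex, exploiting that non-$M^*$ edges have $x_e \leq 1/(\lambda\beta)$. The paper organizes the bookkeeping per vertex rather than globally via your loss term $Z$, and additionally observes that $\hat{y}_u \leq 2x_u \leq 1$ \emph{deterministically} whenever $x_u \leq \tfrac12$ (so Chernoff is only invoked for the at most $O(\mu(G))$ vertices with $x_u > \tfrac12$), but this refinement is not essential since your direct Chernoff bound on all matched vertices gives the same conclusion.

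There is one genuine gap in your concentration step. Your claim that the variables $X_{(v,u')}$ are independent across distinct matched neighbors $u'$ of $v$ fails when two such neighbors $u', u''$ are matched \emph{to each other} in $M^*$: then $e_{u'} = e_{u''} = (u',u'')$ is a single edge, the two events $\{e_{u'}\in M'\}$ and $\{e_{u''}\in M'\}$ coincide, and $X_{(v,u')}, X_{(v,u'')}$ are perfectly positively correlated rather than independent. The paper handles exactly this case by pairing such edges: the sum $X_{(v,u')} + X_{(v,u'')}$ takes value either $0$ or $(x_{(v,u')}+x_{(v,u'')})/(1-p) \leq 4/(\lambda\beta)$, and these pair-sums are independent of each other and of the unpaired terms. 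This patch doubles the range of the summands but leaves your Chernoff conclusion $\Pr[\hat{y}_v > 1+\epsilon] \leq \exp(-\Omega(\epsilon^2\lambda\beta))$ intact, so the rest of your argument goes through unchanged.
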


\begin{proof}
    All the arguments made in this proof are conditioned on $E_s$.

    \begin{claim} \label{cl:expect-hy}
        For every vertex $u$, it holds that $\expect{\hy_u} = p \cdot \chi_{M^*}(u) + (1 - p)x_u$, where $\chi_{M^*}(u)$ is equal to $1$ if $u$ is covered by $M^*$ and zero otherwise.
    \end{claim}
    \begin{proof}
        First, consider a vertex $u$ that is covered by $M^*$,
        say by edge $e^* \in M^*$.
        When $e^*$ appears in $M'$,
        we have $\hy_{e^*} = 1$, and for all the other edges $e$ adjacent to $u$,
        the value of $\hy_e$ is equal to zero. Thus, we will have $\hy_u = 1$,
        i.e.\ $\expect{\hy_u \mid e^* \in M^*} = 1$.

        Now, we condition on $e^* \notin M'$.
        In this case, we will have $\hy_{e^*} = x_{e^*}$.
        Also, for any other edge $e$ adjacent to $u$,
        the probability that $e$ is not adjacent to any edge in $M'$
        is equal to $\frac{p_e}{1 - p}$. 
        Thus with probability $\frac{p_e}{1 - p}$ it holds that $\hy_e = (1 - p)\cdot\frac{x_e}{p_e}$,
        and we will have $\hy_e = 0$ otherwise. Hence, we can write:
        $$
        \expect{\hy_u \mid e^* \notin M'} = x_{e^*} + \sum_{\substack{e \ni u \\ e \neq e^*}} \frac{p_e}{1 - p} \cdot \paren{(1-p)\cdot\frac{x_e}{p_e}} = \sum_{e \ni u} x_e = x_u.
        $$
        Therefore, for a vertex $u$ that is covered by $M^*$, it holds that
        $$
        \expect{\hy_u} = p \cdot \expect{\hy_u \mid e^* \in M'} + (1 - p) \cdot \expect{\hy_u \mid e^* \notin M'}
        = p + (1 - p)x_u.
        $$

        The case where the vertex $u$ is not covered by $M^*$ follows similarly.
        For each edge $e$ adjacent to $u$ we have $\hy_e = (1 - p) \cdot \frac{x_e}{p_e}$
        with probability $p_e$, and we have $\hy_e = 0$ otherwise. Thus
        $$
        \expect{\hy_u} = \sum_{e \ni u} p_e \cdot \paren{(1 - p)\cdot \frac{x_e}{p_e}} = (1 - p)x_u. \qedhere
        $$
    \end{proof}

    The following claim helps us show that we do not lose much of $\hy$
    when we scale it down and zero out some of the edges.

    \begin{claim} \label{cl:low-overflow}
        For every vertex $u$, we have $\hy_u \leq 1$ if $u$ is not covered by $M^*$ or $x_u \leq \frac{1}{2}$, otherwise it holds that $\prob{\hy_u > 1 + \epsilon} \leq \epsilon$.
    \end{claim}
    \begin{proof}
        Consider a vertex $u$ not covered by $M^*$.
        For each edge $e$ adjacent to $u$ it holds that $p_e \geq 1 - p$ because $e$ has at most one neighbouring edge in $M^*$.
        Hence we have:
        $$
        \hy_u \leq \sum_{e \ni u} (1 - p) \cdot \frac{x_e}{p_e} 
        \leq \sum_{e \ni u} x_e
        \leq 1.
        $$

        Now take a vertex $u$ that is covered by $M^*$, say by edge $e^* \in M^*$.
        If $e^* \in M'$, then we have $\hy_u = 1$.
        Therefore we assume $e^* \notin M'$, and accordingly $\hy_e = x_e$.
        For any other edge $e$ it holds that $p_e \geq (1 - p)^2$.
        Hence we have:
        $$
        \hy_u \leq x_{e^*} + \sum_{\substack{e \ni u \\ e \neq e^*}} (1 - p) \cdot \frac{x_e}{p_e} 
        \leq x_{e^*} + \sum_{\substack{e \ni u \\ e \neq e^*}} \frac{x_e}{1 - p} 
        \leq 2 \sum_{e \ni u} x_e = 2 x_u.
        $$
        Thus, if it holds that $x_u \leq \frac{1}{2}$, then it follows $\hy_u \leq 1$.

        For the other cases,
        we use the \hyperref[chernoff]{Chernoff bound}
        to show that with high probability $\hy_u$ is not much larger than $1$.
        As mentioned before, if $e^*$ appears in $M'$, it holds that $\hy_u = 1$.
        Therefore, we condition on $e^* \notin M'$.
        We express $X = y_u - y_{e^*}$ as a sum of independent random variables that take values in $[0, 4/\lambda\beta]$.
        Note that since the edges outside $\Min$
        appear in at most one $M_i$,
        for $e \notin M^*$ we have $x_e \leq \frac{1}{\lambda\beta}$.

        Take an edge $e = (u, v) \neq e^*$.
        If $v$ is not matched in $M^*$ to another neighbour of $u$,
        then the value of $y_e$ is independent of the value of the other edges adjacent to $u$. It is equal to $(1 - p) \cdot \frac{x_e}{p_e} \leq \frac{2}{\lambda \beta}$ with probability $\frac{p_e}{1 - p}$, and zero otherwise.

        If $v$ is matched in $M^*$ to another neighbour $v'$ of $u$.
        Let $e' = (u, v')$. 
        Then the value of $y_e + y_e'$
        is independent of the value of the other edges adjacent to $u$.
        It is equal to $(1 - p) \cdot \frac{x_e + x_e'}{(1 - p)^2} \leq \frac{4}{\lambda \beta}$,
        with probability $(1 - p)$, and zero otherwise.

        Thus, by pairing the edges that are matched together,
        we can express $X$ as a sum of independent random variables in $[0, 4/\lambda \beta]$.
        The expectation of $X$, as calculated in \cref{cl:expect-hy}, is equal to $x_u - x_{e^*} \leq 1$. From the \hyperref[chernoff]{Chernoff bound} we get:
        \begin{align*}
        \prob{\hy_u > 1 + \epsilon}
            &\leq \prob{X > \Exp X + \epsilon} \\
            & \leq \prob{X \cdot \frac{\lambda \beta}{4} > 
            \mu\frac{\lambda \beta}{4} + \epsilon\frac{\lambda \beta}{4}} \\
            & \leq \exp\paren{-\frac{\epsilon^2 \lambda^2 \beta^2 / 16}{3 \mu \lambda \beta  /4}} \tag{By \hyperref[chernoff]{Chernoff bound}, noting that $X\lambda \beta / 4$ is a sum of independent random variables in $[0, 1].$}\\
            & = \exp\paren{-\frac{\epsilon^2 \lambda \beta}{12}} \\
            & < \epsilon\tag{Since $\lambda = \frac{\epsilon}{384}$ and $\beta = 50 \lambda^{-4}$},
        \end{align*}
        concluding the proof.
    \end{proof}

    We analyze $\expect{y_u}$.
    Consider generating $y_u$, in two steps.
    First, for every edge $(u, v)$,
    we let $y_{(u, v)}$ be equal to $\frac{1}{1 + \epsilon}\hy_{(u, v)}$ if $\hy_v \leq 1 + \epsilon$, and zero otherwise.
    Then, we zero out $y$ for all the edges adjacent to $u$ if $\hy_u > 1 + \epsilon$.

    In the first step, we lose a factor $(1 + \epsilon)$ when we scale $\hy_u$ down.
    Also, by \cref{cl:low-overflow}, 
    when we zero out edge $(u, v)$ because $y_v > 1 + \epsilon$, we lose
    an $\epsilon$-fraction from each edge, and consequently from $\expect{\hy_u}$.
    In the second step, again by \cref{cl:low-overflow},
    if $x_u \leq \frac{1}{2}$ or $u$ is not covered by $M^*$ we lose nothing.
    Otherwise, we zero out all the edges with probability at most $\epsilon$.
    We have $\hy_u \leq 2x_u \leq 2$, hence we lose an additive factor of $2\epsilon$.
    Overall for any vertex $u$ we get:
    $$
    \expect{y_u} \geq (1 - \epsilon) \frac{\expect{\hy_u}}{1 + \epsilon} - 2\epsilon
    \geq (1 - 3\epsilon) \paren{p\cdot\chi_{M^*}(u) + (1-p)x_u} - 2\epsilon,
    $$
    and for any vertex $u$ with $x_u \leq \frac{1}{2}$, we get:
    $$
    \expect{y_u} \geq (1 - 3\epsilon) \paren{p\cdot\chi_{M^*}(u) + (1-p)x_u}.
    $$
    Notice that since the sum of the components of $x$ is at most $\mu(G)$,
    there are at most $2 \mu(G)$ vertices with $x_u \geq \frac{1}{2}$.
    Thus, by summing the last two equations over $u$ we get:
    \begin{align*}
        \sum_u \expect{y_u} &=
        \sum_{u \in V(M^*)} \expect{y_u}
        + \sum_{u \notin V(M^*)} \expect{y_u} \\
        &\geq \paren{(1 - 3\epsilon)\sum_{u \in V(M^*)} p + (1-p)x_u}
        + \paren{(1-3\epsilon)\sum_{u \notin V(M^*)} (1-p)x_u}
        - 2\epsilon \cdot 2\mu(G) \\
        &= (1 - 3\epsilon) 2p \cdot \card{M^*} + (1 - 3\epsilon)(1-p)\sum_u x_u - 4\epsilon\mu(G).
    \end{align*}
    Recall that $\card{M^*} = \mu(E_r)$.
    Finally, by dividing both sides by $2$, we get:
    \begin{align*}
        \sum_e \expect{y_e} 
        &\geq (1 - 3\epsilon) p \cdot \mu(E_r) 
        + (1 - 3\epsilon)(1 - p) \sum_e x_e
        - 2\epsilon\mu(G). \qedhere
    \end{align*}
\end{proof}

Now we lift the condition on $E_s$.

\begin{lemma} \label{le:y-is-large}
    It holds that $\expect{\sum_e y_e} \geq \paren{\frac{2}{3} + \frac{p}{3} - 6\epsilon}\mu(G)$.
\end{lemma}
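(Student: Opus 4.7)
The plan is to take outer expectation over $E_s$ on the bound from the preceding lemma and then substitute the two inputs we already have: a Bernoulli-sampling bound on $\mu(E_r)$ and \cref{le:x-is-large}. Starting from the conditional inequality, linearity of expectation immediately gives
\[
    \expect{\sum_e y_e} \geq (1-3\epsilon)\,p\cdot \expect{\mu(E_r)} + (1-3\epsilon)(1-p)\,\expect{\sum_e x_e} - 2\epsilon\,\mu(G).
\]

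Next I would bound the two remaining expectations. Since each edge of $G$ lies in $E_s$ independently with probability $\epsilon$, each edge lies in $E_r$ with probability $1-\epsilon$; restricting any fixed maximum matching of $G$ to $E_r$ yields a matching of expected size $(1-\epsilon)\mu(G)$, so $\expect{\mu(E_r)} \geq (1-\epsilon)\mu(G)$. And \cref{le:x-is-large} directly supplies $\expect{\sum_e x_e} \geq (2/3 - 5\epsilon/3)\mu(G)$. Plugging both bounds in and discarding nonnegative $\epsilon^2$ terms, the coefficient of $\mu(G)$ on the right-hand side is at least
\[
    (1-4\epsilon)p + \paren{\tfrac{2}{3} - \tfrac{11\epsilon}{3}}(1-p) - 2\epsilon
    \;=\; \paren{\tfrac{2}{3} + \tfrac{p}{3}} - \epsilon\cdot \tfrac{17 + p}{3}.
\]
Using the standing assumption $p \leq \tfrac{1}{2}$, we have $(17+p)/3 \leq 35/6 < 6$, so the right-hand side is at least $(2/3 + p/3 - 6\epsilon)\mu(G)$, which is exactly the claimed bound.

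There is no real obstacle here. All of the genuinely nontrivial probabilistic work—the EDCS-based construction of $x$ through \cref{prp:edcs-appx}, and the delicate pairing/rounding argument that turns $\hat y$ into $y$ with only an $O(\epsilon)$ loss per vertex—is already absorbed in the two facts being combined. The only care required is tracking the $\epsilon$-slack correctly when multiplying the $(1-3\epsilon)$, $(1-\epsilon)$, and $(2/3-5\epsilon/3)$ factors, and then using $p \leq 1/2$ to close out the constant in front of $\epsilon$.
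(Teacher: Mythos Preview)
Your proof is correct and follows essentially the same route as the paper: apply the tower law to the conditional bound from the preceding lemma, substitute $\expect{\mu(E_r)} \geq (1-\epsilon)\mu(G)$ and \cref{le:x-is-large}, and simplify the resulting expression in $\epsilon$ and $p$. The only cosmetic difference is in the intermediate arithmetic (you arrive at an $\epsilon(17+p)/3$ error term, the paper at $\epsilon(7p+29)/6$), but both are at most $6\epsilon$ for $p \leq 1$, so the conclusion is identical.
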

\begin{proof}
    We have:
    \begin{align*}
        \expect{\sum_e y_e}
        &= \expect{\expect{\sum_e y_e\ \bigg\vert \ E_s}} \\
        &\geq \expect{(1 - 3\epsilon) p \cdot \mu(E_r) 
        + (1 - 3\epsilon)(1 - p) \sum_e x_e
        - 2\epsilon\mu(G)} \\
        &\geq (1 - 3\epsilon) p \cdot (1 - \epsilon)\mu(G)
        + (1 - 3\epsilon)(1 - p) \paren{\frac{2}{3} - \frac{5}{3}\epsilon}\mu(G)
        - 2\epsilon\mu(G) \\
        \tag{by $\expect{\mu(E_r)} = (1-\epsilon)\mu(G)$ and \cref{le:x-is-large}} \\
        &\geq \paren{\frac{2}{3} + \frac{p}{3} - \paren{\frac{7}{6}p + \frac{29}{6}} \epsilon} \mu(G) \\
        &\geq \paren{\frac{2}{3} + \frac{p}{3} - 6 \epsilon} \mu(G). \qedhere
    \end{align*}
\end{proof}

We show that $E_B \cup H \cup U_A$
has an integral matching almost as large as the size of $y$.

\begin{lemma} \label{le:large-integer-matching}
    There exists a matching of size $(1 - 3\epsilon) \sum_e y_e$ in $E_B \cup H \cup U_A$.
\end{lemma}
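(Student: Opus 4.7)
My plan is to apply \cref{prp:general-fractional-matching} with parameter $3\epsilon$ in place of $\epsilon$ to the fractional matching $y$ restricted to the graph $G' := (V, E_B \cup H \cup U_A)$. This immediately produces an integral matching of size at least $(1-3\epsilon)\sum_e y_e$ in $G'$, which is the lemma.

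Two routine checks are needed first. I would verify that the support of $y$ lies in $E_B \cup H \cup U_A$ by tracing the definition of $\hy$ case-by-case. An edge of $M'$ is either in $\Min = M^* \cap (H\cup U)$, and since $M^* \subseteq E_r$ is disjoint from $H \subseteq E_s$ it actually lies in $U \subseteq E_B \cup U_A$, or it is in $\Mout \cap E_B \subseteq E_B$. An edge in $M^* \setminus M'$ with $\hy_e \neq 0$ has $x_e > 0$, so it lies in $H \cup U$, and by the same disjointness argument in $U$. Any non-$M^*$ edge with $\hy_e \neq 0$ also has $x_e > 0$, hence lies in $H \cup U \subseteq E_B \cup H \cup U_A$. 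Validity of $y$ as a fractional matching is immediate from the zeroing step, which forces $y_u \leq 1$ at every vertex.

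The substantive step is to verify the blossom inequality $\sum_{e \in G'[S]} y_e \leq \lfloor |S|/2 \rfloor$ for every odd set $S$ with $|S| < 1/(3\epsilon)$; even $|S|$ is automatic from $y_u \leq 1$. Since $G' \subseteq G$, it suffices to bound the sum over $G[S]$, which I would split into contributions from $M^*$-edges and non-$M^*$-edges. The $M^*$-edges in $G[S]$ form a matching of size at most $\lfloor |S|/2 \rfloor$, and each has $\hy_e \leq 1$ giving $y_e \leq 1/(1+\epsilon)$, so their total contribution is at most $\lfloor |S|/2 \rfloor / (1+\epsilon)$. For non-$M^*$-edges, I would use $\hy_e \leq (1-p) x_e/p_e \leq 2 x_e$ (from $p_e \geq (1-p)^2$ and $p \leq 1/2$) together with $x_e \leq 1/(\lambda\beta)$, which holds because a non-$M^*$ edge can enter at most one of the matchings $M_i$ before being removed. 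Since $G[S]$ has fewer than $|S|^2/2 < 1/(18\epsilon^2)$ edges, the non-$M^*$ contribution is at most $1/(9\epsilon^2 \lambda\beta)$.

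Combining these two contributions and invoking the parameter choice $\beta = 50\lambda^{-4}$ with $\lambda = \epsilon/384$, I get $\lambda\beta = \Theta(\epsilon^{-3})$, so the non-$M^*$ slack is $O(\epsilon)$ and comfortably fits inside the deficit $\epsilon \lfloor |S|/2 \rfloor / (1+\epsilon)$ left over from the $M^*$ term whenever $|S| \geq 3$; the case $|S| = 1$ is trivial. Hence the blossom inequality holds for all $|S| < 1/(3\epsilon)$, and \cref{prp:general-fractional-matching} delivers the integral matching of size $(1-3\epsilon)\sum_e y_e$. The main obstacle is the non-$M^*$ bound: it requires recognizing both that $x_e$ is tiny outside $M^*$ (because of how the iterative construction removes edges) and that $p_e \geq (1-p)^2$, and verifying that our choice of $\beta$ exceeds $1/\epsilon^3$ by a comfortable margin so the slack is absorbed. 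Once those two observations are in hand, the rest of the verification is direct arithmetic.
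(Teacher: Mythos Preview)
Your proposal is correct and follows essentially the same approach as the paper: bound $y_e$ on non-$M^\ast$ edges by $O(1/(\lambda\beta))$ using $x_e\le 1/(\lambda\beta)$ and $p_e\ge(1-p)^2$, combine with the fact that $M^\ast$-edges form a matching to verify the blossom inequality on small sets, and invoke \cref{prp:general-fractional-matching}. The only cosmetic difference is that the paper scales $y$ down by a further factor $(1-2\epsilon)$ and applies \cref{prp:general-fractional-matching} with parameter $\epsilon$, whereas you exploit the built-in $1/(1+\epsilon)$ factor in $y$ to get slack on the $M^\ast$ term and apply the proposition with parameter $3\epsilon$; the arithmetic and final bound are the same.
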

\begin{proof}
    Notice that for every edge $e$, except the edges of $M^*$ which is a matching,
    it holds that $y_e \leq \frac{4}{\lambda \beta} \leq \epsilon^3$.
    Therefore, for any vertex set $S \subseteq V$
    that $\card{S}$ is smaller than $\frac{1}{\epsilon}$,
    we have:
    $$
        \sum_{e \in G[S]} x_e
        = \sum_{e \in G[S] \cap M^*} x_e
            + \sum_{e \in G[S] \setminus M^*} x_e
        \leq  \card{G[S] \cap M^*} + \frac{1}{\epsilon^2} \epsilon^3
        \leq \floor{\frac{\card{S}}{2}} + \epsilon.
    $$
    Hence, we can apply \cref{prp:general-fractional-matching}
    to $(1 - 2\epsilon)y$, to get $\mu(E_B \cup H \cup U_A) \geq (1 - 3\epsilon) \sum_e y_e$.
\end{proof}

\begin{proof}[Proof of \cref{thm:main}] \label{prf:main}
    By \cref{cl:few-underfull},
    Bernstein's protocol (\cref{main-protocol})
    is implementable using only $O(n \cdot \log n \cdot \poly(1/\epsilon))$
    words of communication.

    By \cref{le:y-is-large} there exists
    a fractional matching $y$ of expected size $\paren{\frac{2}{3} + \frac{p}{3} - 6\epsilon}\mu(G)$.
    Putting this together with \cref{le:large-integer-matching},
    we can conclude \cref{main-protocol} achieves a $\paren{\frac{2}{3} + \frac{p}{3} - 9\epsilon}$ approximation ratio.
    To see this approximation ratio is also achieved with high probability,
    refer to \cref{sec:whp}.
    Finally, letting $p = \frac{1}{2}$ and rescaling $\epsilon$ proves the theorem.
\end{proof}

\begin{proof}[Proof of \cref{thm:multi-party}] \label{prf:main2}
    We need to adjust \cref{main-protocol} for the $k$-party model.
    The first party will sample each of its edges independently with probability $\epsilon/(1 - 1/k)$ to obtain $E_s$.
    It will then construct the subgraph $H \subseteq E_s$ with bounded edge-degree,
    and send it to the next party along with the $(H, \beta, \lambda)$-underfull edges.
    Each of the next parties, except the last,
    communicates the $(H, \beta, \lambda)$-underfull edges it has been assigned
    along with the edges in the message it has received,
    to the next party.
    Finally, the last party will report the maximum matching
    in the graph consisting of all the edges to which it has access.
    
    This way, setting $p = \frac{1}{k}$,
    the first $k - 1$ parties will act as Alice in our analysis,
    and the last party acts as Bob.
    Hence, by a similar argument as in the \hyperref[prf:main]{Proof of Theorem 1},
    \cref{main-protocol} achieves a 
    $(\frac{2}{3} + \frac{1}{3k} - 9\epsilon)$ approximation ratio,
    and a rescaling of $\epsilon$ proves the theorem.
\end{proof}
\section{From Expectation to High Probability} \label{sec:whp}
In this section, we show that with a slight modification,
Bernstein's protocol (\cref{main-protocol}) achieves the $\frac{5}{6}$-approximation
with high probability.
To do so, Alice should send all the edges to Bob
when the number of edges is too small.


\begin{claim} 
\label{cl:large-matching}
    Without loss of generality, we can assume $\mu(G) = \Omega\paren{\log n}$.
\end{claim}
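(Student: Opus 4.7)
The plan is to handle the case $\mu(G) = O(\log n)$ by augmenting the protocol with a trivial branch in which Alice simply forwards her entire input, and then to observe that this branch always fits within the $O(n \cdot \log n \cdot \poly(1/\epsilon))$ communication budget. Once this branch is in place, the analysis of \cref{sec:whp} only needs to treat graphs with $\mu(G) = \Omega(\log n)$, which is exactly the regime in which the concentration inequalities from \cref{prp:self-bounding-concentration} and the Chernoff bound of \cref{chernoff} give failure probability $n^{-\Omega(1)}$.

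First, I would invoke the folklore fact that any graph $G$ admits a vertex cover of size at most $2\mu(G)$: simply take both endpoints of every edge in a maximum matching. Every edge of $G$ has at least one endpoint in this cover, so $\card{E(G)} \leq 2\mu(G) \cdot (n-1)$. In particular, if $\mu(G) \leq c \log n$ for a sufficiently small constant $c$, then $\card{E(G)} = O(n \log n)$ and consequently $\card{E_A} \leq \card{E(G)} = O(n \log n)$ as well.

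Next, I would adjust \cref{main-protocol} so that Alice first tests whether $\card{E_A}$ exceeds a threshold of the form $C \cdot n \log n$ for an appropriately chosen constant $C$. If $\card{E_A}$ is below the threshold, she sends $E_A$ in its entirety to Bob; since $\card{E_A} = O(n \log n)$, the message length is within the stated budget, and Bob can then output an exact maximum matching of $G$, trivially attaining a $5/6$-approximation deterministically. Otherwise, Alice proceeds with the original protocol. The observation above guarantees that whenever this second branch fires, $\mu(G) = \Omega(\log n)$, which is precisely the assumption we wish to impose.

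This claim is a routine reduction, so there is no serious technical obstacle; the only thing to verify is that the threshold constant $C$ is chosen compatibly with the constant $c$ arising in the vertex-cover argument, so that the first branch is triggered on every instance with $\mu(G) = O(\log n)$. I would then combine the two branches by a union bound: the first branch never fails, and under $\mu(G) = \Omega(\log n)$ the concentration arguments in the remainder of \cref{sec:whp} can be applied to the fractional matching $y$ and to the size $\mu(E_r)$ to lift the in-expectation bound of \cref{le:y-is-large} to a high-probability guarantee, completing the proof of \cref{thm:main} in the strong sense.
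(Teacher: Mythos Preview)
Your proposal is correct and follows essentially the same approach as the paper. Both arguments establish $|E(G)| \leq 2n\mu(G)$---you via the $2\mu(G)$-size vertex cover, the paper via an equivalent charging argument onto maximum-matching edges---and then conclude that when $\mu(G) = O(\log n)$ Alice can afford to forward all her edges within the $O(n\log n)$ budget; your version is slightly more explicit in having Alice test $|E_A|$ against a threshold rather than $\mu(G)$ directly, which is the right implementation detail since Alice does not know $\mu(G)$.
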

\begin{proof}
    A charging argument can be used to show that
    the number of edges in $G$ is less than $2 n \mu(G)$.
    Fix a maximum matching $M$ in $G$.
    For any edge $e$, 
    charge a unit to an edge of $M$ that is adjacent to $e$.
    Such an edge must exist since $M$ is a maximum matching.
    This way, we charge once for every edge in $G$,
    and every edge of $M$ is charged at most $n$ times
    through each of its endpoints.
    
    To see why the claim is true,
    note that in case $\mu(G)$ is too small,
    i.e.\ $\mu(G) = \O(\log n)$,
    the number of edges in the graph will be $\O(n \log n)$ and
    Alice can send all of its edges to Bob.
\end{proof}


\begin{lemma}
    Assuming that $\mu(G) = \Omega(\log n)$, whatever approximation ratio \cref{main-protocol} achieves in expectation,
    it will achieve with high probability.
\end{lemma}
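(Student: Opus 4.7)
The plan is a two-step concentration argument, first conditioning on the random sample $E_s$. Given $E_s$, both $H$ and the set $U$ of all $(H,\beta,\lambda)$-underfull edges of $E_r := E \setminus E_s$ are deterministically fixed, and the only remaining randomness is the Bob/Alice assignment of each edge in $E_r$, which conditional on $E_s$ is independently Bernoulli with parameter $p/(1-\epsilon)$. Since $U_A = U \setminus E_B$, we have $E_B \cup H \cup U_A = H \cup U \cup E_B$, so the assignment of any edge $e \in U$ does not affect the output. Thus, conditional on $E_s$, $Z := \mu(E_B \cup H \cup U_A)$ is a function of the independent $\{0,1\}$ variables $(\mathbf{1}[e \in E_B])_{e \in E_r \setminus U}$.

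I would next verify that $Z$, as a function of these coordinates, is self-bounding in the sense of the definition preceding \cref{prp:self-bounding-concentration}. Taking $f_i$ to be the matching size of the graph with $e_i$ deleted, we get $0 \le Z - f_i \le 1$ because adding one edge changes the matching by at most one; and $Z - f_i = 1$ only if $e_i$ belongs to every maximum matching of the current graph, so the contributions sum to at most the size of any single maximum matching, i.e., to at most $Z$. Applying \cref{prp:self-bounding-concentration} with $t = \epsilon \mu(G)$ yields
\[
\Pr\bracket{Z \le \Exp\bracket{Z \mid E_s} - \epsilon\mu(G)}
\le \exp\!\paren{-\frac{\epsilon^2 \mu(G)^2}{2\,\Exp\bracket{Z \mid E_s}}}
\le \exp\!\paren{-\tfrac{\epsilon^2}{2}\mu(G)} = n^{-\omega(1)},
\]
using $\Exp\bracket{Z \mid E_s} \le \mu(G)$ together with $\mu(G) = \Omega(\log n)$ for constant $\epsilon$.

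It remains to certify that, with high probability over $E_s$, the conditional expectation $\Exp\bracket{Z \mid E_s}$ matches the target approximation ratio up to $O(\epsilon)\mu(G)$. Applying \cref{chernoff} to $\card{E_s \cap M^*}$ for a fixed maximum matching $M^*$ of $G$ yields $\mu(E_r) \ge (1 - 2\epsilon)\mu(G)$ with high probability, and \cref{cl:few-underfull} supplies the structural guarantees on $H$ with high probability; fix the intersection of these events. On this event, inspecting the proofs of \cref{le:x-is-large} and the conditional lemma preceding \cref{le:y-is-large} shows that they give $\sum_e x_e \ge (\tfrac{2}{3} - O(\epsilon))\mu(G)$ deterministically and $\Exp\bracket{\sum_e y_e \mid E_s} \ge (\tfrac{2}{3} + \tfrac{p}{3} - O(\epsilon))\mu(G)$, respectively, and then $\Exp\bracket{Z \mid E_s} \ge (1 - 3\epsilon)\,\Exp\bracket{\sum_e y_e \mid E_s}$ by \cref{le:large-integer-matching}. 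Combining these with the self-bounding tail bound and a union bound over the failure event for $E_s$ produces $Z \ge (\tfrac{2}{3} + \tfrac{p}{3} - O(\epsilon))\mu(G)$ with high probability. The main obstacle is the clean reduction to self-bounding randomness on $E_r \setminus U$; once the correct coordinates are isolated, both the concentration step and the reuse of the earlier expectation calculations are routine.
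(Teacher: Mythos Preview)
Your proof is correct and follows essentially the same route as the paper: condition on $E_s$, identify the relevant independent coordinates as the Bob/Alice assignments of edges in $E_r\setminus U$, check that the output matching size is a self-bounding function of these coordinates, and apply \cref{prp:self-bounding-concentration}. Your self-bounding verification and the reduction $E_B\cup H\cup U_A=H\cup U\cup E_B$ are exactly what the paper uses.

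Two small remarks. First, your second step---showing $\Exp[Z\mid E_s]$ meets the target ratio with high probability over $E_s$ by a Chernoff bound on $\mu(E_r)$ and then re-reading \cref{le:x-is-large} and the conditional lemma---is more carefully spelled out than in the paper, which simply writes $\Exp[Z]=r\mu(G)$ inside the conditioned world; your presentation is cleaner on this point. Second, under the stated hypothesis $\mu(G)=\Omega(\log n)$ (not $\omega(\log n)$), your tail bound $\exp(-\epsilon^2\mu(G)/2)$ is $n^{-\Omega(1)}$ rather than $n^{-\omega(1)}$; this is still ``with high probability,'' so only the asymptotic notation needs adjusting.
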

\begin{proof}
We condition on the sample edge set $E_s$, thereby fixing $H$ and $U$.
Bob will have access to the edges of $H \cup U$
because they are either assigned to Bob, or they are communicated to Bob by Alice.
Let $e_1, \ldots, e_k$ be the other edges, i.e.\ the edges of $E_r \setminus U$.
Each of these edges is assigned to Bob independently with probability $\frac{1/2}{1 - \epsilon}$.

We define a self-bounding function $f : \{0, 1\}^k \to \mathbb{Z}$.
For $x \in \{0, 1\}^k$,
the value of $f(x)$ is equal to the maximum matching of $H \cup U \cup E_x$,
where $E_x = \{e_i \mid x_i = 1\}$.
Equivalently, $f(x)$ is the size of the output matching
when the edges $\{e_i \mid x_i = 1\}$ are assigned to Bob,
i.e.\ $E_B \cup H \cup U_A$ is equal to $E_x \cup H \cup U$.
Also, let $f_i(x^{(i)}) = f(x_1, \ldots x_{i-1}, 0, x_{i+1}, \ldots x_k)$.

Take any $x \in \{0, 1\}^k$.
Notice that $f_i(x^{(i)})$ is equal to $\mu(E_x \setminus e_i)$,
and removing an edge from a graph, will decrease its maximum matching by at most $1$.
Therefore, it holds:
$$
0 \leq f(x) - f_i(x^{(i)}) \leq 1, \qquad \forall i : 1 \leq i \leq k.
$$
Take the maximum matching $M$ in $H \cup U \cup E_x$,
and let $I$ be the indices of the edges in $E_x \cap M$,
i.e.\ $I = \{i \mid x_i = 1 \textnormal{ and } e_i \in M\}$.
For any $i \notin I$, the edge set $E_x \setminus e_i$ includes $M$.
Therefore, $f_i(x^{(i)})$ is equal to $f(x)$, and we have:
$$
\sum_{i = 1}^k \paren{f(x) - f_i(x^{(i)})} \leq \card{I} = f(x).
$$
Thus, $f$ is a self-bounding function.

We can now apply \cref{prp:self-bounding-concentration}.
Let $X_i$ be the indicator variable that $e_i$ is assigned to Bob,
i.e.\ $X_i$ is equal to $1$ when $e_i \in E_B$.
Let $Z = f(X_1, \ldots, X_k)$,
and $\mu = \expect{Z} = r\mu(G)$.
That is, $Z$ is the size of the output matching, and $r$ is the approximation ratio that \cref{main-protocol} achieves in expectation.
By \cref{prp:self-bounding-concentration},
we have:
$$
\prob{Z \leq r\mu(G) - \sqrt{2\mu(G) \log n}}
\leq \exp\paren{-\frac{2\mu(G) \log n}{2 \mu(G)}} = \frac{1}{n}.
$$
Thus, with high probability \cref{main-protocol}
outputs a matching of size $(1 - o(1)) r\mu(G)$.
Note that the deviation is $o(1)$ by the assumption that $\mu(G) = \Omega(\log n)$.
\end{proof}

\section{Some Instances for Bernstein's Protocol}\label{apx:bad-example}

In this section, we first prove \cref{thm:tightness} that our analysis of Bernstein's protocol in \cref{thm:main,thm:multi-party} are tight.
Then, we formalize a remark we made in \cref{sec:technical-overview}.

\begin{proof}[Proof of \cref{thm:tightness}]
    Consider a bipartite graph $G(L, R)$, such that $\card{L} = \card{R}$.
    Where $L$ consists of three equally-sized groups of vertices $A_1$, $A_2$, and $A_3$,
    and similarly $R$ consists of $B_1$, $B_2$, and $B_3$.
    The induced subgraphs $M_1 = G[A_1, B_1]$, $M_2 = G[A_2, B_2]$, and $M_3 = G[A_3, B_3]$
    are perfect matchings.
    The induced subgraphs $K_1 = G[A_1, B_2]$ and $K_2 = G[A_2, B_3]$ are complete bipartite graphs,
    and there are no other edges in the graph (see \cref{fig:tight}).
    Note that $G$ has a perfect matching, i.e.\ the size of the maximum matching is equal to $\card{L} = \card{R}$.
    Let $\beta \leq \frac{\card{V(G)}}{12 k}$ which is $O(\card{V(G)})$.

    Let $E_L$ be the set of edges assigned to the last party.
    To upper bound the approximation ratio of the multi-party protocol,
    we construct a vertex cover for $E_L \cup H \cup U$.
    It is a well-known fact that the size of the minimum vertex cover
    is larger than the size of the maximum matching.
    With high probability, the first party can take $H$ to be completely inside $K_1 \cup K_2$, so that $U$ will be equal the $M_1 \cup M_3$, and no edges of $M_2$ will appear in $H \cup U$.

    Let $X = V(M_2 \cap E_L) \cap A_2$, 
    i.e.\ $X$ includes one endpoint from every edge of $M_2$ that is assigned to the last party.
    We claim $A_1 \cup B_3 \cup X$ is a vertex cover for $E_L \cup H \cup U$.
    This is true because $A_1$ covers the edges of $M_1$ and $K_1$,
    $B_3$ covers the edges of $M_3$ and $K_2$,
    and $X$ covers all the remaining edges, which is $E_L \cap M_2$.
    
    Conditioned on the $H$ as described above, each edge of $M_2$ will be assigned to Bob with
    probability $\frac{1/k}{1 - \epsilon}$.
    Thus the expected size of the vertex cover is equal to
    $$
        \card{A_1} + \card{B_3} + \frac{1/k}{1 - \epsilon} \card{A_2}
        = \paren{\frac{1}{3} + \frac{1}{3} + \frac{1/k}{1-\epsilon} \cdot \frac{1}{3}}\mu(G)
        \leq (1 + 2\epsilon) \paren{\frac{2}{3} + \frac{1}{3k}} \mu(G).
    $$
    Letting $\epsilon$ be arbitrarily small proves the theorem.
\end{proof}

As mentioned in \cref{sec:technical-overview},
the graph discussed in \cref{thm:tightness} (see \cref{fig:tight})
has a nice property,
i.e.\ there exists a large matching $M$
such that $G - V(M)$ also has a large matching.
As the output of Bernstein's protocol 
has expected size of at least $\card{M} + \frac{1}{2}\mu(G - V(M))$,
which in this case is equal to $\frac{5}{6}\mu(G)$,
this property might seem useful to analyze the protocol.
However, the following claim shows that such an $M$ does not generally exist.

\begin{claim}
There exists a graph $G$, with arbitrarily large number of vertices,
such that for $\beta \leq \card{V(G)}/4$,
there is a choice of $H$ and $U$,
such that every matching $M$ in $H \cup U$
satisfies $|M| + \frac{1}{2}\mu(G - V(M)) \leq 0.75 \mu(G)$.
Where here $H$ is subgraph with bounded edge-degree $\beta$,
and $U$ is the set of the underfull edges in $G \setminus H$,
i.e.\ the edges of $G \setminus H$ with $H$-degree smaller than $\beta - 1$.
\end{claim}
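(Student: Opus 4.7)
The plan is to build an explicit $8k$-vertex bipartite graph $G$ whose two sides are each partitioned into four groups $A_1,A_2,A_3,A_4$ and $B_1,B_2,B_3,B_4$ of size $k$, with a perfect matching $M_i$ between $A_i$ and $B_i$ for $i=1,\dots,4$, and a complete bipartite graph $K_i$ between $A_i$ and $B_{i+1}$ for $i=1,2,3$. Then $M_1\cup M_2\cup M_3\cup M_4$ is a perfect matching, so $\mu(G)=4k$ and the target bound is $\tfrac34\mu(G)=3k$; taking $k$ arbitrarily large yields the required family of graphs.

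Next, I would choose $H$ to be a (nearly) $\beta/2$-regular bipartite subgraph of each $K_i$; such a subgraph exists precisely because $\beta/2\leq k=|V(G)|/8$, i.e.\ because of the hypothesis $\beta\leq |V(G)|/4$. Every edge of $H$ then has edge-degree $\beta$, and a direct computation shows that the $H$-degrees are $\beta/2$ on the vertices of $A_1\cup A_2\cup A_3\cup B_2\cup B_3\cup B_4$ and $0$ on the vertices of $A_4\cup B_1$. Checking each edge class of $G\setminus H$ then shows that the only classes with edge-degree strictly less than $\beta-1$ are (possibly) $M_1$ and $M_4$ (whose edge-degree is $\beta/2$), while $M_2,M_3$ and $K_i\setminus H$ all have edge-degree $\beta$; hence $U\subseteq M_1\cup M_4$ and $H\cup U\subseteq M_1\cup M_4\cup K_1\cup K_2\cup K_3$.

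The heart of the proof is to show that for every matching $M\subseteq H\cup U$, one has $|M|+\tfrac12\mu(G-V(M))\leq 3k$. The plan is to produce a weight function $w:V(G)\to\mathbb{R}_{\geq 0}$ that is a ``double cover'' of $H\cup U$ (i.e.\ $w_u+w_v\geq 2$ on every edge of $H\cup U$) and an ordinary vertex cover of $G$ (i.e.\ $w_u+w_v\geq 1$ on every edge of $G$). Concretely, set $w\equiv 2$ on $A_1\cup B_4$, $w\equiv 1$ on $A_2\cup B_3$, and $w\equiv 0$ on $A_3\cup A_4\cup B_1\cup B_2$. A one-line case check confirms both conditions: the edges in $M_1,M_4,K_1,K_2,K_3$ (which contain $H\cup U$) all have weight-sum exactly $2$, while the remaining edges $M_2,M_3$ have weight-sum exactly $1$. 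The total is $\sum_v w_v=(2+1+0+0+0+0+1+2)\,k=6k$.

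Letting $R$ be a maximum matching in $G-V(M)$, the vertex sets $V(M)$ and $V(R)$ are disjoint, and summing the two covering inequalities over the edges of the respective matchings gives
\[
2|M|+|R|\;\leq\;\sum_{v\in V(M)}w_v+\sum_{v\in V(R)}w_v\;\leq\;\sum_{v\in V(G)}w_v\;=\;6k,
\]
which after dividing by two yields the desired $|M|+\tfrac12\mu(G-V(M))\leq 3k=\tfrac34\mu(G)$. The main conceptual hurdle is finding the right construction: the three-group example used for \cref{thm:tightness} only forces a $5/6$ bound, so one has to introduce the fourth ``pendant'' pair $(A_4,B_4,M_4)$ to create the slack that the weight function can exploit; once the construction is in place the verification is entirely mechanical.
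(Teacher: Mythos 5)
Your construction is identical to the paper's: same four-block bipartite graph with the chain $M_1,K_1,M_2,K_2,M_3,K_3,M_4$, same $(\beta/2)$-regular choice of $H$ inside the complete bipartite blocks, and same identification of $U$ with (a subset of) $M_1\cup M_4$. Where you diverge is the proof of the bound. The paper argues by exchange: it first shows some optimal matching $M$ can be assumed to contain all of $M_1\cup M_4$, then computes $\mu(G-V(M))$ exactly as a function of $|M\cap K_2|$ and observes the objective is constant and equal to $\tfrac34\mu(G)$. You instead exhibit a single weight function $w$ that is simultaneously a fractional double-cover of $H\cup U$ and a fractional vertex cover of all of $G$, with total mass $6k$, and conclude $2|M|+|R|\le 6k$ for any matching $M\subseteq H\cup U$ and any matching $R$ in $G-V(M)$ by one double-count over vertices. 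Both are correct; I checked that your $w$ (values $2,1,0,0$ on $A_1,A_2,A_3,A_4$ and mirrored $0,0,1,2$ on $B_1,B_2,B_3,B_4$) gives weight-sum exactly $2$ on every edge of $M_1\cup M_4\cup K_1\cup K_2\cup K_3$ and exactly $1$ on $M_2\cup M_3$, and that $\sum_v w_v=6k$. Your LP-dual style argument is shorter and more mechanical once the right $w$ is guessed, and it avoids the exchange step entirely; the paper's exchange argument has the minor advantage of showing the bound $\tfrac34\mu(G)$ is actually achieved by some $M$, not merely an upper bound, though that is not required by the claim.
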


\begin{proof}
Let $G(L, R)$ be a bipratite graph, such that $\card{L} = \card{R}$.
Where $L$ consists of four equally-sized groups of vertices $A_1$, $A_2$, $A_3$, and $A_4$,
and similarly $R$ consists of $B_1$, $B_2$, $B_3$, and $B_4$.
The induced subgraphs $M_1 = G[A_1, B_1]$,
$M_2 = G[A_2, B_2]$,
$M_3 = G[A_3, B_3]$,
and $M_4 = G[A_4, B_4]$
are perfect matchings.
The induced subgraphs $K_1 = G[A_1, B_2]$,
$K_2 = G[A_2, B_3]$,
and $K_3 = G[A_3, B_4]$
are complete bipartite graphs,
and there are no other edges in the graph.
Note that $G$ has a perfect matching (see \cref{fig:tight2}).

\begin{figure}[h]
    \centering
    \includegraphics[scale=0.75]{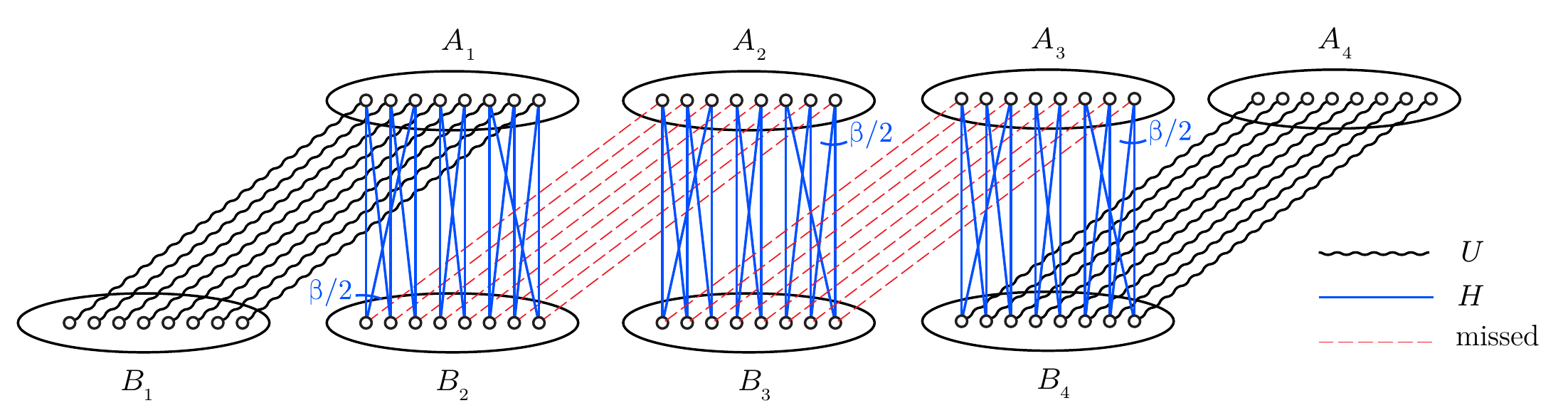}
    \caption{\small An example where every matching $M$ satisfies $|M| + \frac{1}{2}\mu(G - V(M)) \leq 0.75 \mu(G)$.}
    \label{fig:tight2}
\end{figure}

Let $H$ be a $(\beta/2)$-regular subgraph of $K_1 \cup K_2 \cup K_3$.
The corresponding $U$ is equal to $M_1 \cup M_4$,
and none of the edges in $M_2 \cup M_3$ appear in $H \cup U$.
We prove that $\max_M |M| + \frac{1}{2}\mu(G - V(M)) \leq 0.75 \mu(G)$,
where $M$ ranges over all the matchings in $H \cup U$.
We say that a matching is optimal if it achieves the maximum possible value for $|M| + \frac{1}{2}\mu(G - V(M))$.

First, we prove there exists an optimal matching that includes all of $M_1 \cup M_4$.
To see this, take an optimal matching $M$.
Take any vertex $u$ in $A_1$, and let its adjacent edge in $M_1$ be $e$.
If $u$ is covered by some edge $e' \in M$,
then removing $e'$ from $M$ and adding $e$ does not decrease 
$|M| + \frac{1}{2}\mu(G - V(M))$.
Because this would not change $\card{M}$
and can only increase $\mu(G - V(M))$.
Also, when $u$ is not covered by $M$,
adding the $e$ to $M$
will cause $\card{M}$ to grow by one,
and $\mu(G - V(M))$ to decrease by at most one.
A similar argument holds for the vertices in $B_4$.
Hence, by repeatedly adding such edges,
we can obtain an optimal matching containing $M_1 \cup M_4$.

Now we can restrict our attention to $A_2 \cup A_3 \cup B_2 \cup B_3$. 
We claim no matter what the rest of $M$ (a.k.a.\ $M \setminus (M_1 \cup M_2) = M \cap K_2$) is, the value of $\card{M} + \frac{1}{2}\mu(G - V(M))$ would be the same.
Because if $\card{M \cap K_2}$ is equal to $k$,
it holds that $\mu(G - V(M)) = \frac{1}{2}\mu(G) - 2k$.
Hence the optimal value of $|M| + \frac{1}{2}\mu(G - V(M))$
is equal to
$$
\card{M_1} + \card{M_2} + k + \frac{1}{2} \paren{\frac{1}{2}\mu(G) - 2k} = \frac{3}{4}\mu(G).
$$

To see why $\mu(G - V(M)) = \frac{1}{2}\mu(G) - 2k$,
note that since $M_1 \cup M_4 \subseteq M$,
any vertex of $B_2 \cup A_3$ is a singleton in $G - V(M)$.
Hence, a maximum matching in $G - V(M)$
is the set of edges in $M_2 \cup M_3$
that are not adjacent to an edge of $M$, which has size $\frac{1}{2}\mu(G) - 2k$.
\end{proof}


\section*{Acknowledgements}

The second author thanks David Wajc for enlightening discussions about going beyond 2/3-approximations via EDCS.

\bibliographystyle{plainnat}
\bibliography{references}
	
\end{document}